\newtheorem{theorem}{Theorem}
\newtheorem{proposition}{Proposition}
\begin{document}

\title{Slow Adaptive OFDMA Systems\\
Through Chance Constrained Programming}

\author{William Wei-Liang Li,~\IEEEmembership{Student~Member,~IEEE},
Ying Jun (Angela) Zhang,~\IEEEmembership{Member,~IEEE}, \\Anthony
Man-Cho So, and Moe Z. Win,~\IEEEmembership{Fellow,~IEEE}

\thanks{Copyright~\copyright~2010 IEEE. Personal use of this material is permitted. However, permission to use this material for any other purposes must be obtained from the IEEE by sending a request to pubs-permissions@ieee.org.

Manuscript received July 01, 2009; revised October 28, 2009 and February 09, 2010; accepted February 15, 2010. This research was supported, in part, by the Competitive Earmarked Research Grant (Project numbers 418707 and 419509) established under the University Grant Committee of Hong Kong, Project \#MMT-p2-09 of the Shun Hing Institute of Advanced Engineering, the Chinese University of Hong Kong, the National Science Foundation under Grants ECCS-0636519 and ECCS-0901034, the Office of Naval Research Presidential Early Career Award for Scientists and Engineers (PECASE) N00014-09-1-0435, and the MIT Institute for Soldier Nanotechnologies. The associate editor coordinating the review of this manuscript and approving it for publication was Dr. Walid Hachem.

W. W.-L. Li is with the Department of Information Engineering, the Chinese University of Hong Kong, Hong Kong (wlli@ie.cuhk.edu.hk). 

Y. J. Zhang is with the Department of Information Engineering and the Shun Hing Institute of Advanced Engineering, the Chinese University of Hong Kong, Hong Kong (yjzhang@ie.cuhk.edu.hk). 

A. M.-C. So is with the Department of Systems Engineering and Engineering Management and the Shun Hing Institute of Advanced Engineering, the Chinese University of Hong Kong, Hong Kong (manchoso@se.cuhk.edu.hk). 

M. Z. Win is with the Laboratory for
Information \& Decision Systems (LIDS), Massachusetts Institute of
Technology, MA, USA (moewin@mit.edu).

Digital Object Identifier XXX/XXX
}}

\markboth{IEEE Transactions on Signal Processing, Vol.~X, No.~X, XXX 2010}{Li \MakeLowercase{\textit{et al.}}: Slow Adaptive OFDMA Systems through Chance Constrained Programming}
\maketitle

\begin{abstract}
Adaptive OFDMA has recently been recognized as a promising technique
for providing high spectral efficiency in future broadband wireless
systems. The research over the last decade on adaptive OFDMA systems
has focused on adapting the allocation of radio resources, such as
subcarriers and power, to the instantaneous channel conditions of
all users. However, such ``fast'' adaptation requires high
computational complexity and excessive signaling overhead. This
hinders the deployment of adaptive OFDMA systems worldwide. This
paper proposes a slow adaptive OFDMA scheme, in which the subcarrier
allocation is updated on a much slower timescale than that of the
fluctuation of instantaneous channel conditions. Meanwhile, the data
rate requirements of individual users are accommodated on the fast
timescale with high probability, thereby meeting the requirements
except occasional outage. Such an objective has a natural chance
constrained programming formulation, which is known to be
intractable. To circumvent this difficulty, we formulate safe
tractable constraints for the problem based on recent advances in
chance constrained programming. We then develop a polynomial-time
algorithm for computing an optimal solution to the reformulated
problem. Our results show that the proposed slow adaptation scheme
drastically reduces both computational cost and control signaling
overhead when compared with the conventional fast adaptive OFDMA.
Our work can be viewed as an initial attempt to apply the chance
constrained programming methodology to wireless system designs.
Given that most wireless systems can tolerate an occasional dip in
the quality of service, we hope that the proposed methodology will
find further applications in wireless communications.
\end{abstract}

\begin{keywords}
Dynamic Resource Allocation, Adaptive OFDMA, Stochastic Programming,
Chance Constrained Programming
\end{keywords}

\IEEEpeerreviewmaketitle

\singlespacing

\section{Introduction}\label{intro}
\IEEEPARstart{F}{uture} wireless systems will face a growing demand for broadband and multimedia services. Orthogonal frequency division multiplexing (OFDM) is a leading technology to meet this demand due to its ability to mitigate wireless channel impairments. The inherent
multicarrier nature of OFDM facilitates flexible use of subcarriers
to significantly enhance system capacity. Adaptive subcarrier
allocation, recently referred to as adaptive orthogonal frequency
division multiple access (OFDMA) \cite{WonCheLetMur:99,ZhaLet:04}, has been considered as a primary contender in next-generation wireless
standards, such as IEEE802.16 WiMAX \cite{802.16e:Wimax} and 3GPP-LTE
\cite{3GPP:LTE}.

In the existing literature, adaptive OFDMA exploits time, frequency,
and multiuser diversity by quickly adapting subcarrier allocation
(SCA) to the instantaneous channel state information (CSI) of all
users. Such ``fast'' adaptation suffers from high computational
complexity, since an optimization problem required for adaptation
has to be solved by the base station (BS) every time the channel
changes. Considering the fact that wireless channel fading can vary
quickly (e.g., at the order of milli-seconds in wireless cellular
system), the implementation of fast adaptive OFDMA becomes
infeasible for practical systems, even when the number of users is
small. Recent work on reducing complexity of fast adaptive OFDMA
includes \cite{WonEva:08,MarGiaDigRam:08}, etc. Moreover, fast adaptive
OFDMA requires frequent signaling between the BS and mobile users in
order to inform the users of their latest allocation decisions. The
overhead thus incurred is likely to negate the performance gain
obtained by the fast adaptation schemes. To date, high computational
cost and high control signaling overhead are the major hurdles that
prevent adaptive OFDMA from being deployed in practical systems.

We consider a slow adaptive OFDMA scheme, which is motivated by
\cite{ConWinChi:07}, to address the aforementioned problem. In contrast
to the common belief that radio resource allocation should be
readapted once the instantaneous channel conditions change, the
proposed scheme updates the SCA on a much slower timescale than that
of channel fluctuation. Specifically, the allocation decisions are
fixed for the duration of an adaptation window, which spans the
length of many coherence times. By doing so, computational cost and
control signaling overhead can be dramatically reduced. However,
this implies that channel conditions over the adaptation window are
uncertain at the decision time, thus presenting a new challenge in the design of slow adaptive OFDMA schemes. An important question is how
to find a valid allocation decision that remains optimal and
feasible for the entire adaptation window. Such a problem can be
formulated as a stochastic programming problem, where the channel
coefficients are random rather than deterministic.

Slow adaptation schemes have recently been studied in other contexts
such as slow rate adaptation \cite{ConWinChi:07,LiKis:08} and slow power
allocation \cite{QueShiWin:07}. Therein, adaptation decisions are made
solely based on the long-term average channel conditions instead of
fast channel fading. Specifically, random channel parameters are
replaced by their mean values, resulting in a deterministic rather
than stochastic optimization problem. By doing so,
quality-of-service (QoS) can only be guaranteed in a long-term
average sense, since the short-term fluctuation of the channel is
not considered in the problem formulation. With the increasing
popularity of wireless multimedia applications, however, there will
be more and more inelastic traffic that require a guarantee on the
minimum short-term data rate. As such, slow adaptation schemes based
on average channel conditions cannot provide a satisfactory QoS.

On another front, robust optimization methodology can be applied to
meet the short-term QoS. For example, robust optimization method was
applied in \cite{QueShiWin:07,QueWinChi:10,LiZhaWin:09} to find a solution that is feasible for the entire uncertainty set of channel conditions, i.e., to guarantee the instantaneous data rate requirements regardless of the channel realization. Needless to say, the resource allocation
solutions obtained via such an approach are overly conservative. In
practice, the worst-case channel gain can approach zero in deep
fading, and thus the resource allocation problem can easily become
infeasible. Even if the problem is feasible, the resource
utilization is inefficient as most system resources must be
dedicated to provide guarantees for the worst-case scenarios.

Fortunately, most inelastic traffic such as that from multimedia applications can tolerate an occasional dip in the instantaneous data rate
without compromising QoS. This presents an opportunity to enhance
the system performance. In particular, we employ chance constrained
programming techniques by imposing probabilistic constraints on user
QoS. Although this formulation captures the essence of the problem,
chance constrained programs are known to be computationally
intractable except for a few special cases \cite{BirLou:B97}. In
general, such programs are difficult to solve as their feasible sets
are often non-convex. In fact, finding feasible solutions to a
generic chance constrained program is itself a challenging research
problem in the Operations Research community. It is partly due to
this reason that the chance constrained programming methodology is
seldom pursued in the design of wireless systems.

In this paper, we propose a slow adaptive OFDMA scheme that aims at
maximizing the long-term system throughput while satisfying with high
probability the short-term data rate requirements. The key
contributions of this paper are as follows:
\begin{itemize}
\item We design the slow adaptive OFDMA system based on chance
constrained programming techniques. Our formulation guarantees the
short-term data rate requirements of individual users except in rare
occasions. To the best of our knowledge, this is the first work that uses
chance constrained programming in the context of resource allocation
in wireless systems.
\item We exploit the special structure of the probabilistic constraints
in our problem to construct safe tractable constraints (STC) based
on recent advances in the chance constrained programming literature.
\item We design an interior-point algorithm that is tailored for the slow adaptive OFDMA problem, since the formulation with STC, although convex, cannot be trivially solved using off-the-shelf optimization software. Our algorithm can efficiently compute an optimal solution to the problem with STC in polynomial time.
\end{itemize}

The rest of the paper is organized as follows. In Section
\ref{model}, we discuss the system model and problem formulation. An
STC is introduced in Section \ref{approx} to solve the original
chance constrained program. An efficient tailor-made algorithm for
solving the approximate problem is then proposed in Section \ref{algo}.
In Section \ref{reduct}, we reduce the problem size based on some
practical assumptions, and show that the revised problem can be solved by the proposed algorithm with much lower complexity. In Section
\ref{simu}, the performance of the slow adaptive OFDMA system is
investigated through extensive simulations. Finally, the paper is
concluded in Section \ref{conclude}.

\section{System Model and Problem Formulation}\label{model}
This paper considers a single-cell multiuser OFDM system with $K$
users and $N$ subcarriers. We assume that the instantaneous channel
coefficients of user $k$ and subcarrier $n$ are described by complex
Gaussian\footnote{Although the techniques used in this paper are
applicable to any fading distribution, we shall prescribe to a
particular distribution of fading channels for illustrative
purposes.} random variables
$h_{k,n}^{(t)}\sim\mathcal{CN}(0,\sigma_k^2)$,
independent\footnote{The case when frequency correlations exist
among subcarriers will be discussed in Section \ref{simu}.} in both
$n$ and $k$. The parameter $\sigma_k$ can be used to model the
long-term average channel gain as
$\sigma_k=\left(\frac{d_k}{d_0}\right)^{-\gamma}\!\!\cdot s_k$, where $d_k$ is
the distance between the BS and subscriber $k$, $d_0$ is the
reference distance, $\gamma$ is the amplitude path-loss exponent and
$s_k$ characterizes the shadowing effect. Hence, the channel gain
$g_{k,n}^{(t)}=\big|h_{k,n}^{(t)}\big|^2$ is an exponential random
variable with probability density function (PDF) given by
\begin{equation}\label{expon}
f_{g_{k,n}}(\xi)=\frac{1}{\sigma_k}\exp\left(-\frac{\xi}{\sigma_k}\right).
\end{equation}
The transmission rate of user $k$ on subcarrier $n$ at time $t$ is
given by
$$
r^{(t)}_{k,n}=W\log_2\left(1+\frac{p_tg_{k,n}^{(t)}}{\Gamma
N_0}\right),
$$
where $p_t$ is the transmission power of a subcarrier,
$g_{k,n}^{(t)}$ is the channel gain at time $t$, $W$ is the
bandwidth of a subcarrier, $N_0$ is the power spectral density of
Gaussian noise, and $\Gamma$ is the capacity gap that is related to
the target bit error rate (BER) and coding-modulation schemes.

In traditional fast adaptive OFDMA systems, SCA decisions are made
based on instantaneous channel conditions in order to maximize the
system throughput. As depicted in Fig. \ref{fg_sys.mdl}a, SCA is
performed at the beginning of each time slot, where the duration of
the \emph{slot} is no larger than the coherence time of the channel.
Denoting by $x_{k,n}^{(t)}$ the fraction of airtime assigned to user
$k$ on subcarrier $n$, fast adaptive OFDMA solves at each time slot
$t$ the following linear programming problem:
\begin{align}
\mathcal{P}_\text{fast}:\quad\max_{x_{k,n}^{(t)}} &\quad\sum\limits_{k=1}^K\sum\limits_{n=1}^N{x^{(t)}_{k,n}r^{(t)}_{k,n}} \label{obj-f}\\
\text{s.t.}
&\quad\sum\limits_{n=1}^N{x_{k,n}^{(t)}r^{(t)}_{k,n}}\geq q_k, \quad\forall k \label{const-f}\\
&\quad\sum\limits_{k=1}^K x_{k,n}^{(t)} \le 1, \quad\forall n \notag\\
&\quad x_{k,n}^{(t)}\geq 0, \quad\forall k,n, \notag
\end{align}
where the objective function in \eqref{obj-f} represents the total
system throughput at time $t$, and \eqref{const-f} represents the data
rate constraint of user $k$ at time $t$ with $q_k$ denoting the
minimum required data rate. We assume that $q_k$ is known by the BS
and can be different for each user $k$. Since $g_{k,n}^{(t)}$ (and
hence $r^{(t)}_{k,n}$) varies on the order of coherence time, one
has to solve the Problem $\mathcal{P}_\text{fast}$ at the beginning
of every time slot $t$ to obtain SCA decisions. Thus, the above fast
adaptive OFDMA scheme is extremely costly in practice.

\begin{figure}[t]
\begin{center}
\includegraphics [height=5cm]{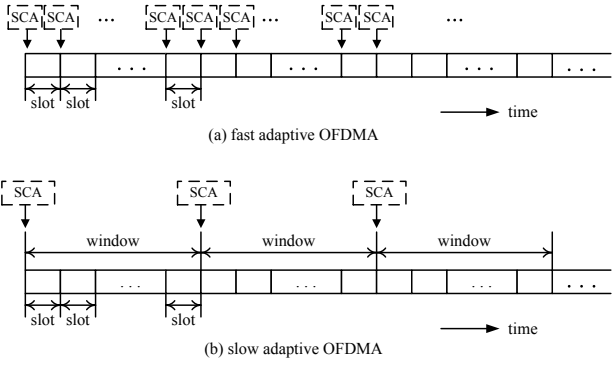}
\caption{Adaptation timescales of fast and slow adaptive OFDMA
system (SCA = SubCarrier Allocation).} \label{fg_sys.mdl}
\end{center}
\end{figure}

In contrast to fast adaptation schemes, we propose a slow adaptation
scheme in which SCA is updated only every \emph{adaptation window}
of length $T$. More precisely, SCA decision is made at the beginning
of each adaptation window as depicted in Fig. \ref{fg_sys.mdl}b, and
the allocation remains unchanged till the next window. We consider
the duration $T$ of a window to be large compared with that of fast
fading fluctuation so that the channel fading process over the
window is ergodic; but small compared with the large-scale channel
variation so that path-loss and shadowing are considered to be fixed
in each window. Unlike fast adaptive systems that require the exact
CSI to perform SCA, slow adaptive OFDMA systems rely only on the
distributional information of channel fading and make an SCA
decision for each window.

Let $x_{k,n}\in[0,1]$ denote the SCA for a given adaptation
window\footnote{It is practical to assume $x_{k,n}$ as a real number
in slow adaptive OFDMA. Since the data transmitted during each
window consists of a large mount of OFDM symbols, the time-sharing
factor $x_{k,n}$ can be mapped into the ratio of OFDM symbols
assigned to user $k$ for transmission on subcarrier $n$.}. Then, the
time-average throughput of user $k$ during the window becomes
\begin{equation*}
	\bar{b}_k=\sum_{n=1}^Nx_{k,n}\bar{r}_{k,n},
\end{equation*}
where
\begin{equation*}
\hspace{-0.35cm}
\bar{r}_{k,n}=\frac{1}{T}\int_T r_{k,n}^{(t)}dt	
\end{equation*}
is the time-average data rate of user $k$ on subcarrier $n$ during
the adaptation window. The time-average system throughput is given by
\begin{equation*}
	\bar{b}=\sum_{k=1}^K\sum_{n=1}^Nx_{k,n}\bar{r}_{k,n}.	
\end{equation*}
Now, suppose that each user has a short-term data rate requirement
$q_k$ defined on each time slot. If
$\sum_{n=1}^Nx_{k,n}r_{k,n}^{(t)}< q_k$, then we say that a rate
outage occurs for user $k$ at time slot $t$, and the probability of
rate outage for user $k$ during the window $[t_0,t_0+T]$ is defined
as
$$
P_k^{\text{out}}\triangleq\text{Pr}\left\{\sum_{n=1}^N
x_{k,n}r_{k,n}^{(t)} < q_k\right\}, \quad\forall t\in[t_0,t_0+T],
$$
where $t_0$ is the beginning time of the window.

Inelastic applications, such as voice and multimedia, that are
concerned with short-term QoS can often tolerate an occasional dip
in the instantaneous data rate. In fact, most applications can run
smoothly as long as the short-term data rate requirement is
satisfied with sufficiently high probability. With the above
considerations, we formulate the slow adaptive OFDMA problem as
follows:
\begin{align}
\mathcal{P}_\text{slow}:\,~\max_{x_{k,n}} &
\quad\sum_{k=1}^K\sum_{n=1}^N
x_{k,n}\mathbb{E}\left\{r_{k,n}^{(t)}\right\} \label{obj1}\\
\text{s.t.} & \quad\text{Pr}\left\{\sum_{n=1}^N x_{k,n}r_{k,n}^{(t)} \ge
q_k\right\} \ge 1-\epsilon_k, \quad\forall k \label{chance}\\
&\quad\sum_{k=1}^K x_{k,n} \le 1, \quad\forall n \nonumber \\
&\quad x_{k,n}\ge0, \quad\forall k,n, \nonumber
\end{align}
where the expectation\footnote{In \eqref{obj1}, we replace the
time-average data rate $\bar{r}_{k,n}$ by its ensemble average
$\mathbb{E}\left\{r_{k,n}^{(t)}\right\}$ due to the ergodicity of
channel fading over the window.} in \eqref{obj1} is taken over
the random channel process $g=\{g_{k,n}^{(t)}\}$ for $t\in[t_0,t_0+T]$,
and $\epsilon_k\in[0,1]$ in \eqref{chance} is the maximum outage
probability user $k$ can tolerate. In the above formulation, we seek
the optimal SCA that maximizes the expected system throughout while
satisfying each user's short-term QoS requirement, i.e., the
instantaneous data rate of user $k$ is higher than $q_k$ with
probability at least $1-\epsilon_k$. The above formulation is a
\emph{chance constrained program} since a probabilistic constraint
\eqref{chance} has been imposed.

\section{Safe Tractable Constraints}\label{approx}
Despite its utility and relevance to real applications, the chance
constraint \eqref{chance} imposed in $\mathcal{P}_\text{slow}$ makes
the optimization highly intractable. The main reason is that the
convexity of the feasible set defined by \eqref{chance} is difficult
to verify. Indeed, given a generic chance constraint
$\text{Pr}\left\{F(\mathbf{x},\mathbf{r})>0\right\}\le \epsilon$
where $\mathbf{r}$ is a random vector, $\mathbf{x}$ is the vector of
decision variable, and $F$ is a real-valued function, its feasible
set is often non-convex except for very few special cases
\cite{BirLou:B97,NemSha:06}. Moreover, even with the nice function in
\eqref{chance}, i.e.,
$F(\mathbf{x},\mathbf{r})=q_k-\sum_{n=1}^Nx_{k,n}r_{k,n}^{(t)}$ is
bilinear in $\mathbf{x}$ and $\mathbf{r}$, with independent entries
$r_{k,n}^{(t)}$ in $\mathbf{r}$ whose distribution is known, it is
still unclear how to compute the probability in \eqref{chance}
efficiently.

To circumvent the above hurdles, we propose the following
formulation $\mathcal{\tilde{P}}_\text{slow}$ by replacing the
chance constraints \eqref{chance} with a system of constraints
$\mathcal{H}$ such that (i) $\mathbf{x}$ is feasible for
\eqref{chance} whenever it is feasible for $\mathcal{H}$, and (ii)
the constraints in $\mathcal{H}$ are convex and efficiently
computable\footnote{Condition (i) is referred to as ``safe''
condition, and condition (ii) is referred to as ``tractable''
condition.}. The new formulation is given as follows:
\begin{align}
\mathcal{\tilde{P}}_\text{slow}:\quad\max_{x_{k,n}} &
\quad\sum_{k=1}^K\sum_{n=1}^N
x_{k,n}\mathbb{E}\left\{r_{k,n}^{(t)}\right\} \label{obj2}\\
\text{s.t.}
&\quad\inf_{\varrho>0}\biggr\{q_k+\varrho\sum_{n=1}^N\Lambda_k(-\varrho^{-1}x_{k,n})\notag\\
&\hspace{1.5cm}-\varrho\log\epsilon_k\biggr\}\le0, \quad\forall k \label{chance.approx}\\
&\quad\sum_{k=1}^K x_{k,n} \le 1, \quad\forall n \label{xset21}\\
&\quad x_{k,n}\ge0, \quad\forall k,n, \label{xset22}
\end{align}
where $\Lambda_k(\cdot)$ is the cumulant generating function of
$r_{k,n}^{(t)}$,
\begin{align}\label{moment1}
\Lambda_k(-\varrho^{-1}\hat{x}_{k,n})=\log\biggr[\int_0^{\infty}&\left(1+\frac{p_t\xi}{\Gamma
N_0}\right)^{-\frac{W\hat{x}_{k,n}}{\varrho\ln2}}\notag\\
&\hspace{0.58cm}\cdot\frac{1}{\sigma_k}\exp\left(-\frac{\xi}
{\sigma_k}\right)d\xi\biggr].
\end{align}
In the following, we first prove that any solution $\mathbf{x}$ that is feasible for the STC \eqref{chance.approx}
in $\mathcal{\tilde{P}}_\text{slow}$ is also feasible for the chance
constraints \eqref{chance}. Then, we prove that
$\mathcal{\tilde{P}}_\text{slow}$ is convex.

\begin{proposition}\label{thm_apprx}
Suppose that $g_{k,n}^{(t)}$ (and hence $r^{(t)}_{k,n}$) are
independent random variables for different $n$ and $k$, where the
PDF of $g_{k,n}^{(t)}$ follows \eqref{expon}. Furthermore, given
$\epsilon_k>0$, suppose that there exists an
$\mathbf{\hat{x}}=[\hat{x}_{1,1},\cdots,\hat{x}_{N,1},
\ldots,\hat{x}_{1,K},\cdots,\hat{x}_{N,K}]^T\in\mathbb{R}^{NK}$ such
that
\begin{equation}\label{apx2}
G_k(\mathbf{\hat{x}})\!\triangleq\!\inf_{\varrho>0}\!
\left\{\!q_k\!+\!\varrho\sum_{n=1}^N\Lambda_k(-\varrho^{-1}\hat{x}_{k,n})
\!-\!\varrho\log\epsilon_k\!\right\}
\!\le\!0,\quad\forall k.
\end{equation}
Then, the allocation decision $\mathbf{\hat{x}}$ satisfies
\begin{equation}\label{chance1}
\Pr\left\{\sum_{n=1}^N \hat{x}_{k,n}r_{k,n}^{(t)} \ge q_k\right\}
\ge 1-\epsilon_k, \quad\forall k.
\end{equation}
\end{proposition}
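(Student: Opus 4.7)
The plan is to obtain the chance constraint bound via a Chernoff (exponential moment) argument, which is the standard route behind the Bernstein approximation of Nemirovski--Shapiro. Fix an arbitrary user $k$, let $Y_k=\sum_{n=1}^N \hat{x}_{k,n}r_{k,n}^{(t)}$, and note that proving \eqref{chance1} amounts to showing $\Pr\{Y_k<q_k\}\le\epsilon_k$.

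First I would apply Markov's inequality to the nonnegative random variable $e^{-tY_k}$ for an arbitrary $t>0$:
\begin{equation*}
\Pr\{Y_k<q_k\}=\Pr\!\left\{e^{-tY_k}>e^{-tq_k}\right\}\le e^{tq_k}\,\mathbb{E}\!\left[e^{-tY_k}\right].
\end{equation*}
Because the $r_{k,n}^{(t)}$ are independent across $n$ (inherited from the independence of the $g_{k,n}^{(t)}$'s), the moment generating function factorizes:
\begin{equation*}
\mathbb{E}\!\left[e^{-tY_k}\right]=\prod_{n=1}^{N}\mathbb{E}\!\left[e^{-t\hat{x}_{k,n}r_{k,n}^{(t)}}\right]=\exp\!\left(\sum_{n=1}^{N}\Lambda_k(-t\hat{x}_{k,n})\right),
\end{equation*}
where I am using the definition of the cumulant generating function of $r_{k,n}^{(t)}$ given in \eqref{moment1}. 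The explicit integral form in \eqref{moment1} is simply the expectation written out using the exponential density in \eqref{expon}, so no new computation is required at this point.

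Next I would make the substitution $t=1/\varrho$ with $\varrho>0$, which converts the Chernoff bound into
\begin{equation*}
\Pr\{Y_k<q_k\}\le\exp\!\left(\frac{1}{\varrho}\Big[q_k+\varrho\sum_{n=1}^{N}\Lambda_k(-\varrho^{-1}\hat{x}_{k,n})\Big]\right).
\end{equation*}
Since this bound holds for every $\varrho>0$, minimizing the right-hand side over $\varrho>0$ preserves the inequality. The hypothesis $G_k(\hat{\mathbf{x}})\le 0$ then states precisely that
\begin{equation*}
\inf_{\varrho>0}\Big[q_k+\varrho\sum_{n=1}^{N}\Lambda_k(-\varrho^{-1}\hat{x}_{k,n})\Big]\le\varrho\log\epsilon_k
\end{equation*}
in the sense that one can drive $q_k/\varrho+\sum_n\Lambda_k(-\varrho^{-1}\hat{x}_{k,n})-\log\epsilon_k$ arbitrarily close to (or below) $0$, yielding $\Pr\{Y_k<q_k\}\le\epsilon_k$.

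The only mildly delicate step is handling the case in which the infimum defining $G_k$ is not attained: one cannot simply plug in a minimizer. I would address this by picking, for each $\eta>0$, a $\varrho_\eta>0$ with $\phi(\varrho_\eta)\triangleq q_k+\varrho_\eta\sum_n\Lambda_k(-\varrho_\eta^{-1}\hat{x}_{k,n})-\varrho_\eta\log\epsilon_k\le\eta$, substituting into the Chernoff bound to get $\Pr\{Y_k<q_k\}\le\epsilon_k\exp(\eta/\varrho_\eta)$, and then sending $\eta\downarrow 0$ after verifying that $\varrho_\eta$ stays bounded away from $0$ (which follows from the fact that $\Lambda_k(-\varrho^{-1}\hat{x}_{k,n})\to 0$ as $\varrho\to\infty$ and is bounded below, so the infimum cannot be approached along $\varrho\to 0$ when $G_k\le 0$). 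Apart from this technicality, the argument is a clean Chernoff bound plus independence; the most likely source of mistakes is bookkeeping of signs and the $1/\varrho$ versus $\varrho$ rescaling when converting the exponential bound into the infimum formulation \eqref{apx2}.
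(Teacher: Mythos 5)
Your argument is correct and is essentially the paper's own proof run in reverse: the paper rewrites the constraint \eqref{apx2} via the cumulant generating function and independence over $n$ into the form $\inf_{\varrho>0}\{\varrho\,\mathbb{E}\{\exp(\varrho^{-1}F_k)\}-\varrho\epsilon_k\}\le0$ and then cites the Bernstein approximation theorem (Appendix~\ref{apdx_bern}), whose proof is exactly your Markov/Chernoff bound with $\psi(z)=e^{z}$ and the substitution $t=1/\varrho$. The only difference is that you unroll that cited theorem explicitly and, to your credit, also address the case where the infimum in \eqref{apx2} is not attained --- a technicality the paper passes over silently when it equates ``$\inf\le0$'' with ``there exists a $\varrho>0$ satisfying \eqref{eq:pf2}.''
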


\begin{proof}
Our argument will use the Bernstein approximation theorem proposed
in \cite{NemSha:06}.\footnote{For the reader's convenience, both the
theorem and a rough proof are provided in Appendix \ref{apdx_bern}.}
Suppose there exists an $\mathbf{\hat{x}}\in\mathbb{R}^{NK}$ such
that $G_k(\mathbf{\hat{x}})\le0$, i.e.,
\begin{eqnarray}
\inf_{\varrho>0}\left\{q_k+\varrho\sum_{n=1}^N\Lambda_k(-\varrho^{-1}\hat{x}_{k,n})
-\varrho\log\epsilon_k\right\}\le0. \label{eq:pf1}
\end{eqnarray}
The function inside the $\inf_{\varrho>0}\{\cdot\}$  is equal to
\begin{align}
&q_k+\varrho\sum_{n=1}^N\log\mathbb{E}\left\{\exp\biggr(-\varrho^{-1}
\hat{x}_{k,n}r_{k,n}^{(t)}\biggr)\right\}-\varrho\log\epsilon_k
\\
&\hspace{0.2cm}=
q_k+\varrho\log\mathbb{E}\left\{\exp\biggr(\varrho^{-1}\Big(-\sum_{n=1}^N
\hat{x}_{k,n}r_{k,n}^{(t)}\Big)\biggr)\right\}-\varrho\log\epsilon_k
\label{eq:indep}\\
&\hspace{0.2cm}=
\varrho\log\mathbb{E}\left\{\exp\biggr(\varrho^{-1}\Big(q_k-\sum_{n=1}^N
\hat{x}_{k,n}r_{k,n}^{(t)}\Big)\biggr)\right\}-\varrho\log\epsilon_k,
\qquad
\end{align}
where the expectation $\mathbb{E}\left\{\cdot\right\}$ can be
computed using the distributional information of $g_{k,n}^{(t)}$ in
\eqref{expon}, and \eqref{eq:indep} follows from the independence of
random variable $r_{k,n}^{(t)}$ over $n$.

Let $F_k(\mathbf{x},\mathbf{r})=q_k-\sum_{n=1}^N
x_{k,n}r_{k,n}^{(t)}$. Then, \eqref{eq:pf1} is equivalent to
\begin{eqnarray}
\inf_{\varrho>0}\left\{\varrho\mathbb{E}
\left\{\exp\left(\varrho^{-1}F_k(\mathbf{\hat{x}},\mathbf{r})
\right)\right\}-\varrho\epsilon_k\right\}\le0. \label{eq:pf2}
\end{eqnarray}
According to Theorem \ref{thm_bern} in Appendix \ref{apdx_bern}, the
chance constraints \eqref{chance1} hold if there exists a
$\varrho>0$ satisfying \eqref{eq:pf2}. Thus, the validity of
\eqref{chance1} is guaranteed by the validity of \eqref{apx2}.
\end{proof}

Now, we prove the convexity of \eqref{chance.approx} in the
following proposition.
\begin{proposition}\label{thm_cvx}
The constraints imposed in \eqref{chance.approx} are convex in
$\mathbf{x}=[x_{1,1},\cdots,x_{N,1},
\ldots,x_{1,K},\cdots,x_{N,K}]^T\in\mathbb{R}^{NK}$.
\end{proposition}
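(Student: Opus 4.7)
The plan is to identify the constraint function
$$
G_k(\mathbf{x})=\inf_{\varrho>0}\phi_k(\mathbf{x},\varrho),\qquad \phi_k(\mathbf{x},\varrho)\triangleq q_k+\varrho\sum_{n=1}^N\Lambda_k(-\varrho^{-1}x_{k,n})-\varrho\log\epsilon_k,
$$
as the partial minimization of a jointly convex function, which yields a convex function of $\mathbf{x}$; the $k$-th constraint $G_k(\mathbf{x})\le 0$ then cuts out a convex sublevel set. So the whole argument reduces to establishing joint convexity of $\phi_k$ on $\mathbb{R}^{NK}\times\mathbb{R}_{++}$.

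The first step is the classical fact that the cumulant generating function $\Lambda_k(t)=\log\mathbb{E}\{\exp(t r_{k,n}^{(t)})\}$ is convex in $t$ (by Hölder's inequality applied to the MGF, its logarithm is convex). Composing with the linear map $t\mapsto -t$ preserves convexity, so $y\mapsto\Lambda_k(-y)$ is convex as well.

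The second (and main) step is the perspective trick: for any convex function $h$, its perspective $(y,\varrho)\mapsto\varrho\,h(y/\varrho)$ is jointly convex on $\mathbb{R}\times\mathbb{R}_{++}$. Taking $h(y)=\Lambda_k(-y)$, each summand $\varrho\,\Lambda_k(-\varrho^{-1}x_{k,n})$ is jointly convex in $(x_{k,n},\varrho)$ on $\mathbb{R}\times\mathbb{R}_{++}$, hence (regarding it as a function of the whole $\mathbf{x}$ vector) jointly convex in $(\mathbf{x},\varrho)$. Summing these $N$ convex functions and adding the affine term $q_k-\varrho\log\epsilon_k$ keeps convexity, so $\phi_k$ is jointly convex in $(\mathbf{x},\varrho)$ on $\mathbb{R}^{NK}\times\mathbb{R}_{++}$.

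The third step invokes the standard partial-minimization lemma: if $\phi_k(\mathbf{x},\varrho)$ is jointly convex and the minimization is over a convex set (here the open half-line $\varrho>0$), then $G_k(\mathbf{x})=\inf_{\varrho>0}\phi_k(\mathbf{x},\varrho)$ is convex in $\mathbf{x}$ (taking values in $\mathbb{R}\cup\{-\infty\}$). Consequently the sublevel set $\{\mathbf{x}:G_k(\mathbf{x})\le 0\}$ is convex for every $k$, so \eqref{chance.approx} defines a convex feasible region. The only non-routine point is the perspective-convexity invocation; once $\Lambda_k$ is recognized as a log-MGF (hence convex) the rest is bookkeeping.
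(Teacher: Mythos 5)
Your proof is correct and follows essentially the same route as the paper: establish joint convexity of $H_k(\mathbf{x},\varrho)$ in $(\mathbf{x},\varrho)$ and then invoke preservation of convexity under partial minimization over $\varrho>0$. The only difference is that you make explicit the perspective-function argument (convexity of $(y,\varrho)\mapsto\varrho\,\Lambda_k(-y/\varrho)$) that the paper compresses into ``it is easy to verify \ldots since the cumulant generating function is convex,'' which is a welcome addition rather than a deviation.
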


\begin{proof}
The convexity of \eqref{chance.approx} can be verified as
follows. We define the function inside the
$\inf_{\varrho>0}\{\cdot\}$ in \eqref{apx2} as
\begin{equation}
H_k(\mathbf{x},\varrho)\triangleq
q_k+\varrho\sum_{n=1}^N\Lambda_k(-\varrho^{-1}x_{k,n})-\varrho\log\epsilon_k,
\quad\forall k.
\end{equation}
It is easy to verify the convexity of $H_k(\mathbf{x},\varrho)$ in
$(\mathbf{x},\varrho)$, since the cumulant generating function is
convex. Hence, $G_k(\mathbf{x})$ in \eqref{apx2} is convex in
$\mathbf{x}$ due to the preservation of convexity by minimization
over $\varrho>0$.
\end{proof}

\section{Algorithm}\label{algo}
In this section, we propose an algorithm for solving Problem
$\mathcal{\tilde{P}}_\text{slow}$. In $\tilde{P}_\text{slow}$, the
STC \eqref{chance.approx} arises as a subproblem, which by itself
requires a minimization over $\varrho$. Hence, despite its
convexity, the entire problem $\tilde{P}_\text{slow}$ cannot be
trivially solved using standard solvers of convex optimization.
This is due to the fact that the subproblem introduces difficulties,
for example, in defining the barrier function in
\emph{path-following algorithms} or providing the (sub-)gradient in
\emph{primal-dual methods} (see \cite{ResPar:B06} for details of
these algorithms). Fortunately, we can employ \emph{interior point
cutting plane methods} to solve Problem
$\mathcal{\tilde{P}}_\text{slow}$ (see \cite{Mit:03} for a
survey).  Before we delve into the details, let us briefly sketch
the principles of the algorithm as follows.

\begin{algorithm}[t]
	\vspace{0.1cm}
\caption{Structure of the Proposed Algorithm} \label{alg.sketch}
\begin{algorithmic}[1]
 \REQUIRE The feasible solution set of Problem
 $\mathcal{\tilde{P}}_\text{slow}$ is a compact set $\mathcal{X}$ defined by
 \eqref{chance.approx}-\eqref{xset22}.
 \STATE Construct a polytope $X^0\supset\mathcal{X}$ by \eqref{xset21}-\eqref{xset22}. Set $i\leftarrow 0$.
 \STATE Choose a query point (\emph{Subsection \ref{algo}\!.\,A-1}) at the $i$th iteration as $\mathbf{x}^i$
 by computing the analytic center of $X^i$. Initially, set $\mathbf{x}^0=\mathbf{e}/K\in X^0$ where $\mathbf{e}$ is an $N$-vector of ones.
 \STATE Query the separation oracle (\emph{Subsection \ref{algo}\!.\,A-2}) with $\mathbf{x}^i$:
 \IF{$\mathbf{x}^i\in\mathcal{X}$}
    \STATE generate a hyperplane (optimality cut) through $\mathbf{x}^i$ to remove the part of $X^i$ that has lower objective
    values
 \ELSE
    \STATE generate a hyperplane (feasibility cut) through $\mathbf{x}^i$ to remove the part of $X^i$ that contains infeasible
    solutions.
 \ENDIF
 \STATE Set $i\leftarrow i+1$, and update $X^{i+1}$ by the separation hyperplane.
 \IF{termination criterion (\emph{Subsection \ref{algo}\!.\,B}) is satisfied}
    \STATE stop
 \ELSE
    \STATE return to step 2.
 \ENDIF
\end{algorithmic}
\vspace{0.15cm}
\end{algorithm}

Suppose that we would like to find a point $\mathbf{x}$ that is
feasible for \eqref{chance.approx}-\eqref{xset22} and is within a
distance of $\delta>0$ to an optimal solution $\mathbf{x}^*$ of
$\mathcal{\tilde{P}}_\text{slow}$, where $\delta>0$ is an error
tolerance parameter (i.e., $\mathbf{x}$ satisfies
$||\mathbf{x}-\mathbf{x}^*||_2<\delta$). We maintain the invariant
that at the beginning of each iteration, the feasible set is
contained in some polytope (i.e., a bounded polyhedron). Then, we
generate a query point inside the polytope and ask a ``separation
oracle'' whether the query point belongs to the feasible set. If
not, then the separation oracle will generate a so-called separating
hyperplane through the query point to cut out the polytope, so that
the remaining polytope contains the feasible set.\footnote{Note that
such a separating hyperplane exists due to the convexity of the
feasible set \cite{HirLem:B01}.}  Otherwise, the separation oracle
will return a hyperplane through the query point to cut out the
polytope towards the opposite direction of improving objective
values.

We can then proceed to the next iteration with the new polytope.
To keep track of the progress, we can use the so-called potential
value of the polytope. Roughly speaking, when the potential value
becomes large, the polytope containing the feasible set has become
small. Thus, if the potential value exceeds a certain threshold, so
that the polytope is negligibly small, then we can terminate the
algorithm. As will be shown later, such an algorithm will in fact
terminate in a polynomial number of steps.

We now give the structure of the algorithm. A detailed flow chart is
shown in Fig. \ref{fg_algo} for readers' interest.

\begin{figure}[h]
\begin{center}
	\psfrag{x1}[][]{\footnotesize{$X^0\!:\!(A^0,\mathbf{b}^0)$}}
 	\psfrag{x2}[][]{\footnotesize{$\mathbf{x}^0=\mathbf{e}/K.$}}
	\psfrag{x3}[][]{\footnotesize{$\varrho^*\!=\!\arg\inf\limits_{\varrho>0}[H(\mathbf{x}^i,\varrho)]$}}
	\psfrag{x4}[][]{\footnotesize{$H(\mathbf{x}^i,\varrho^*)\!\leq\!0$}}
	\psfrag{x5}[][]{\footnotesize{$X^{i+1}\!\!:\!(\!A^{i+1}\!,\!\mathbf{b}^{i+1}\!)$}}
 	\psfrag{x6}[][]{\footnotesize{$X^{i+1}.$}}
	\psfrag{x7}[][]{\footnotesize{$\mathbf{x}^{i+1},A^{i+1}\!,\mathbf{b}^{i+1}$}}
 	\psfrag{x8}[][]{\footnotesize{$\mathbf{x}^{i+1}$}}
 	\psfrag{x9}[][]{\footnotesize{$(A^{i+1},\mathbf{b}^{i+1}).$}}
 	\psfrag{x10}[][]{\footnotesize{$\mathbf{x}^*\!=\!\mathbf{x}^i.$}}
\includegraphics [height=15cm]{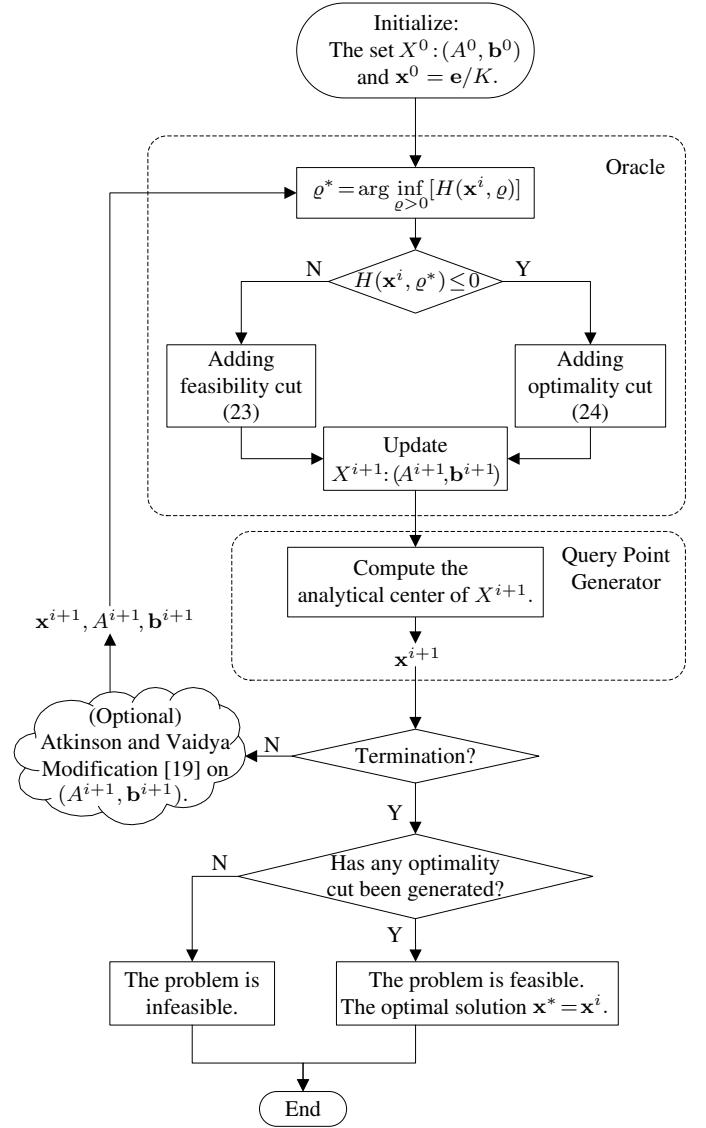}
\caption{Flow chart of the algorithm for solving Problem $\mathcal{\tilde{P}}_\text{slow}$.} \label{fg_algo}
\end{center}
\end{figure}

\subsection{The Cutting-Plane-Based Algorithm}
\emph{1) Query Point Generator: (Step 2 in Algorithm
\ref{alg.sketch})}

In each iteration, we need to generate a query point inside the
polytope $X^i$. For algorithmic efficiency, we adopt the analytic
center (AC) of the containing polytope as the query point
\cite{GonMerSarVia:96}. The AC of the polytope
$X^i=\{\mathbf{x}\in\mathbb{R}^{NK}: A^i\mathbf{x}\le
\mathbf{b}^i\}$ at the $i$th iteration is the unique solution
$\mathbf{x}^i$ to the following convex problem:
\begin{align}\label{ac}
\max_{\{\mathbf{x}^i,\mathbf{s}^i\}} & \quad\sum_{m=1}^{M^{i}}\log s_m^i \\
\text{s.t.} & \quad\mathbf{s}^i=\mathbf{b}^i-A^i\mathbf{x}^i. \notag
\end{align}
We define the optimal value of the above problem as the potential
value of the polytope $X^i$. Note that the uniqueness of the
analytic center is guaranteed by the strong convexity of the
potential function
$\mathbf{s}^i\mapsto-\sum_{m=1}^{M^{i}}\log s_m^i$, assuming that
$X^i$ is bounded and has a non-empty interior. The AC of a polytope
can be viewed as an approximation to the geometric center of the
polytope, and thus any hyperplane through the AC will separate the
polytope into two parts with roughly the same volume.

Although it is computationally involved to directly solve \eqref{ac}
in each iteration, it is shown in \cite{GofLuoYe:96} that an
approximate AC is sufficient for our purposes, and that an
approximate AC for the $(i+1)$st iteration can be obtained from an
approximate AC for the $i$th iteration by applying $\mathcal{O}(1)$
Newton steps.

\emph{2) The Separation Oracle: (Steps 3-8 in Algorithm
\ref{alg.sketch})}

The oracle is a major component of the algorithm that plays two
roles: checking the feasibility of the query point, and generating
cutting planes to cut the current set.

\begin{itemize}
\item \emph{Feasibility Check}
\end{itemize}

We write the constraints of $\mathcal{\tilde{P}}_\text{slow}$ in a
condensed form as follows:
\begin{align}
&G_k(\mathbf{x})=\inf_{\varrho>0}\left\{H_k(\mathbf{x},\varrho)\right\}\le0, \quad\forall k \label{fb1}\\
&A^0~\mathbf{x}\le \mathbf{b}^0 \label{fb2}
\end{align}
where
\begin{align*}
A^0&=\begin{bmatrix}I_N~I_N\cdots~
	I_N\\-I_{NK}\end{bmatrix}\in\mathbb{R}^{(N+NK)\times NK},\\
\mathbf{b}^0&=[\mathbf{e}_N^T,
	~\mathbf{0}_{NK}^T]^T\in\mathbb{R}^{N+NK}
\end{align*}
with $I_N$ and $\mathbf{e}_N$ denoting the $N\times N$ identity matrix and $N$-vector of ones respectively, and \eqref{fb2} is the combination\footnote{To reduce numerical
errors in computation, we suggest normalizing each constraint in
\eqref{fb2}.} of \eqref{xset21} and \eqref{xset22}. Now, we first use \eqref{fb2} to construct a
relaxed feasible set via
\begin{equation}
X^0=\{\mathbf{x}\in\mathbb{R}^{NK}: A^0~\mathbf{x}\le
\mathbf{b}^0\}.
\end{equation}
Given a query point $\mathbf{x}\in X^0$, we can verify its
feasibility to $\mathcal{\tilde{P}}_\text{slow}$ by checking if it
satisfies \eqref{fb1}, i.e., if
$\inf_{\varrho>0}\{H_k(\mathbf{x},\varrho)\}$ is no larger than 0.
This requires solving a minimization problem over $\varrho>0$. Due
to the unimodality of $H_k(\mathbf{x},\varrho)$ in $\varrho$, we can
simply take a line search procedure, e.g., using Golden-section
search or Fibonacci search, to find the minimizer $\varrho^*$. The
line search is more efficient when compared with
derivative-based algorithms, since only function
evaluations\footnote{The cumulant generating function
$\Lambda_k(\cdot)$ in \eqref{moment1} can be evaluated numerically,
e.g., using rectangular rule, trapezoid rule, or Simpson's rule,
etc.} are needed during the search.

\begin{itemize}
\item \emph{Cutting Plane Generation}
\end{itemize}

In each iteration, we generate a cutting plane, i.e., a hyperplane
through the query point, and add it as an additional constraint to
the current polytope $X^i$. By adding cutting plane(s) in each
iteration, the size of the polytope keeps shrinking. There are two
types of cutting planes in the algorithm depending on the
feasibility of the query point.

If the query point $\mathbf{x}^i\in X^i$ is infeasible, then a
hyperplane called \emph{feasibility cut} is generated at
$\mathbf{x}^i$ as follows:
\begin{equation}\label{fbcut}
\biggr(\frac{\mathbf{u}^{i,\bar{\kappa}}}{||\mathbf{u}^{i,\bar{\kappa}}||}\biggr)^T(\mathbf{x}-\mathbf{x}^i)\le0,
\quad\forall \bar{\kappa}\in \bar{K},
\end{equation}
where $||\cdot||$ is the Euclidean norm,
$\bar{K}=\{k:H_k(\mathbf{x}^i,t^*)>0,~k=1,2,\cdots,K\}$ is the set
of users whose chance constraints are violated, and
$\mathbf{u}^{i,\bar{\kappa}}=[u_{1,k}^{i,\bar{\kappa}},\cdots,u_{N,1}^{i,\bar{\kappa}},\ldots,u_{1,K}^{i,\bar{\kappa}},\cdots,u_{N,K}^{i,\bar{\kappa}}]^T
\in\mathbb{R}^{NK}$ is the gradient of
$G_{\bar{\kappa}}(\mathbf{x})$ with respect to $\mathbf{x}$, i.e.,
\begin{align}
&\!u_{k,n}^{i,\bar{\kappa}}=\frac{\partial H_{\bar{\kappa}}(\mathbf{x},\varrho^*)}{\partial x_{k,n}}\biggr|_{x_{k,n}=x_{k,n}^i} \notag\\
&\!=\frac{-\frac{W}{\ln2}\int_0^{\infty}\!\biggr(\!1\!+\!\frac{p_t\xi}{\Gamma
N_0}\!\biggr)^{-\frac{Wx_{k,n}^i}{\varrho^*\ln2}}
\!\!\ln\!\biggr(\!1\!+\!\frac{p_t\xi}{\Gamma N_0}\!\biggr)
\!\frac{1}{\sigma_{\bar{\kappa}}}\exp\!\left(-\frac{\xi}
{\sigma_{\bar{\kappa}}}\right)\!d\xi}{\int_0^{\infty}\biggr(1+\frac{p_t\xi}{\Gamma
N_0}\biggr)^{-\frac{Wx_{k,n}^i}{\varrho^*\ln2}}\frac{1}{\sigma_k}\exp\left(-\frac{\xi}
{\sigma_{\bar{\kappa}}}\right)d\xi}. \nonumber
\end{align}
The reason we call \eqref{fbcut} a feasibility cut(s) is that any
$\mathbf{x}$ which does not satisfy \eqref{fbcut} must be infeasible
and can hence be dropped.

If the point $\mathbf{x}^i$ is feasible, then an \emph{optimality
cut} is generated as follows:
\begin{equation}\label{optcut}
\quad\biggr(\frac{\mathbf{v}}{||\mathbf{v}||}\biggr)^T(\mathbf{x}-\mathbf{x}^i)\le0,
\end{equation}
where
$\mathbf{v}=\big[-\mathbb{E}\{r_{1,1}^{(t)}\},\cdots,-\mathbb{E}\{r_{N,1}^{(t)}\},\ldots,
-\mathbb{E}\{r_{1,K}^{(t)}\},\cdots,$
$-\mathbb{E}\{r_{N,K}^{(t)}\}\big]^T\in\mathbb{R}^{NK}$
is the derivative of the objective of
$\mathcal{\tilde{P}}_\text{slow}$ in \eqref{obj2} with respect to
$\mathbf{x}$. The reason we call \eqref{optcut} an optimality cut is
that any optimal solution $\mathbf{x}^*$ must satisfy \eqref{optcut}
and hence any $\mathbf{x}$ which does not satisfy \eqref{optcut} can
be dropped.

Once a cutting plane is generated according to \eqref{fbcut} or
\eqref{optcut}, we use it to update the polytope $X^i$ at the $i$th
iteration as follows
$$
X^i=\{\mathbf{x}\in\mathbb{R}^{NK}: A^i\mathbf{x}\le\mathbf{b}^i\}.
$$
Here, $A^i$ and $\mathbf{b}^i$ are obtained by adding the cutting
plane to the previous polytope $X^{i-1}$. Specifically, if the
oracle provides a feasibility cut as in \eqref{fbcut}, then
\begin{align*}
A^i&=\begin{bmatrix}A^{i-1}\\(\mathbf{u}_k^i/||\mathbf{u}_k^i||)^T\end{bmatrix}
\in\mathbb{R}^{(M^{i-1}+|\bar{K}|)\times NK},\quad\\
b^i&=\begin{bmatrix}b^{i-1}\\(\mathbf{u}_k^i/||\mathbf{u}_k^i||)^T\mathbf{x}^i\end{bmatrix}
\in\mathbb{R}^{M^{i-1}+|\bar{K}|}
\end{align*}
where $M_{i-1}$ is the number of rows in $A_{i-1}$, and $|\cdot|$ is
the number of elements contained in the given set; if the oracle
provides an optimality cut as in \eqref{optcut}, then
\begin{align*}
A^i&=\begin{bmatrix}A^{i-1}\\(\mathbf{v}/||\mathbf{v}||)^T\end{bmatrix}
\in\mathbb{R}^{(M^{i-1}+1)\times NK},\quad\\
b^i&=\begin{bmatrix}b^{i-1}\\(\mathbf{v}/||\mathbf{v}||)^T\mathbf{x}^i\end{bmatrix}
\in\mathbb{R}^{M^{i-1}+1}.
\end{align*}

\subsection{Global Convergence \!\&\! Complexity (Step \!10 in Algorithm \!\ref{alg.sketch})}
In the following, we investigate the convergence properties of the
proposed algorithm. As mentioned earlier, when the polytope is too
small to contain a full-dimensional closed ball of radius
$\delta>0$, the potential value will exceeds a certain threshold.
Then, the algorithm can terminate since the query point is within a
distance of $\delta>0$ to some optimal solution of
$\tilde{\mathcal{P}}_{slow}$. Such an idea is formalized in
\cite{GofLuoYe:96}, where it was shown that the analytic center-based
cutting plane method can be used to solve convex programming
problems in polynomial time.  Upon following the proof in
\cite{GofLuoYe:96}, we obtain the following result:
\begin{theorem}\label{term}
(cf. \cite{GofLuoYe:96}) Let $\delta>0$ be the error tolerance
parameter, and let $m$ be the number of variables.  Then,
Algorithm \ref{alg.sketch} terminates with a solution $\mathbf{x}$
 that is feasible for $\mathcal{\tilde{P}}_\text{slow}$ and satisfies
$\|\mathbf{x}-\mathbf{x}^*\|_2<\delta$ for some optimal solution
$\mathbf{x}^*$ to $\mathcal{\tilde{P}}_\text{slow}$ after at most
$\mathcal{O}((m/\delta)^2)$ iterations.
\end{theorem}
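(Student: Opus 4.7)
The plan is to cast Problem $\mathcal{\tilde{P}}_\text{slow}$ into the generic convex feasibility/optimization framework covered by the analytic center cutting plane method (ACCPM) analysis in \cite{GofLuoYe:96}, and then invoke the convergence theorem established there. Since the complexity result is a direct application of that general theorem, the proof is really a verification that all the hypotheses of ACCPM are met by our construction.

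First, I would check the structural prerequisites. The initial polytope $X^0$ defined by \eqref{fb2} is bounded (the simplex-type constraints $\sum_k x_{k,n}\le 1$ together with $x_{k,n}\ge 0$ enclose the feasible region in the unit box $[0,1]^{NK}$) and has a non-empty interior (e.g.\ $\mathbf{x}^0=\mathbf{e}/K$ lies strictly inside). Hence $X^0$ admits a well-defined analytic center and the potential function in \eqref{ac} is strongly convex on the interior. By Proposition \ref{thm_cvx}, the true feasible set $\mathcal{X}\subset X^0$ is convex and compact, so the optimum $\mathbf{x}^*$ is attained.

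Next, I would verify that our separation oracle produces valid \emph{central cuts} in the sense required by \cite{GofLuoYe:96}. When $\mathbf{x}^i\in\mathcal{X}$, the optimality cut \eqref{optcut} is the (normalized) gradient of the linear objective through $\mathbf{x}^i$, and any optimal point lies in the half-space retained by the cut. When $\mathbf{x}^i\notin\mathcal{X}$, some constraint $G_{\bar\kappa}(\mathbf{x}^i)>0$ is violated; by the convexity of $G_{\bar\kappa}$ (Proposition \ref{thm_cvx}), every feasible $\mathbf{x}$ satisfies
\[
G_{\bar\kappa}(\mathbf{x})\ge G_{\bar\kappa}(\mathbf{x}^i)+(\mathbf{u}^{i,\bar\kappa})^T(\mathbf{x}-\mathbf{x}^i),
\]
so the half-space defined by \eqref{fbcut} contains $\mathcal{X}$. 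Thus in either case the polytope $X^{i+1}$ obtained after adding the cut still satisfies $\mathcal{X}\subset X^{i+1}$ whenever the cut is of feasibility type, while optimality cuts retain all candidate optima. I would also note that the cuts pass through the current analytic center $\mathbf{x}^i$, which is precisely the ``central cut'' regime analyzed in \cite{GofLuoYe:96}.

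With these ingredients in place, I would invoke the standard potential-function argument of \cite{GofLuoYe:96}. The analytic center of $X^{i+1}$ can be recovered from that of $X^i$ in $\mathcal{O}(1)$ damped Newton steps, and each added central cut increases the potential value $\phi(X^i)=-\sum_m\log s_m^i$ by at least an absolute constant, while $\phi$ is bounded above by a quantity of order $m\log(m/\delta)$ as long as $X^i$ still contains a ball of radius $\delta$. Equating the two bounds gives that after $\mathcal{O}((m/\delta)^2)$ iterations the current polytope can no longer contain any ball of radius $\delta$. Since $\mathcal{X}\subset X^i$ throughout and optimality cuts carve away only points with strictly worse objective than $\mathbf{x}^i$, the remaining polytope must lie within Euclidean distance $\delta$ of some optimum $\mathbf{x}^*$, completing the proof. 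The main obstacle—and the only non-routine part—is the potential-function bookkeeping, but this is exactly the content of \cite{GofLuoYe:96} and can be quoted once the reductions above have been carried out.
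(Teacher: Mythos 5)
Your proposal takes essentially the same route as the paper, whose own ``proof'' consists of nothing more than the remark that the result follows the analysis in \cite{GofLuoYe:96}; your added verification that the initial polytope $X^0$ is bounded with non-empty interior, that the feasibility cuts retain all of $\mathcal{X}$ by the subgradient inequality for the convex $G_{\bar\kappa}$, and that the optimality cuts retain every optimal point, is precisely the reduction needed to legitimately invoke that reference, so the argument is sound. One caveat on your sketch of the potential bookkeeping: a constant increase per cut measured against a cap of order $m\log(m/\delta)$ would give $\mathcal{O}(m\log(m/\delta))$ iterations, not $\mathcal{O}((m/\delta)^2)$ --- in the actual analysis of \cite{GofLuoYe:96} the upper bound on the potential grows with the number of accumulated cuts and the per-cut increase is not a uniform constant --- but since you explicitly defer this bookkeeping to the cited reference (as does the paper), this inconsistency does not affect the validity of the overall argument.
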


Thus, the proposed algorithm can solve Problem
$\mathcal{\tilde{P}}_\text{slow}$ within
$\mathcal{O}((NK/\delta)^2)$ iterations. It turns out that the
algorithm can be made considerably more efficient by dropping
constraints that are deemed ``unimportant'' in \cite{AtkVai:95}. By
incorporating such a strategy in Algorithm \ref{alg.sketch}, the
total number of iterations needed by the algorithm can be reduced to
$\mathcal{O}(NK\log^2(1/\delta))$. We refer the readers to
\cite{AtkVai:95,Mit:03} for details.

\subsection{Complexity Comparison between Slow and Fast Adaptive OFDMA}
It is interesting to compare the complexity of slow and fast
adaptive OFDMA schemes formulated in
$\mathcal{\tilde{P}_\text{slow}}$ and $\mathcal{P}_\text{fast}$,
respectively. To obtain an optimal solution to
$\mathcal{P}_\text{fast}$, we need to solve a linear program (LP).
This requires $\mathcal{O}(\sqrt{NK}L_0)$ iterations, where $L_0$ is
number of bits to store the data defining the LP \cite{Ye:B97}. At
first glance, the iteration complexity of solving a fast adaptation
$\mathcal{P}_\text{fast}$ can be lower than that of solving
$\mathcal{\tilde{P}}_\text{slow}$ when the number of users or
subcarriers are large. However, it should be noted that only one
$\mathcal{\tilde{P}}_\text{slow}$ needs to be solved for each
adaptation window, while $\mathcal{P}_\text{fast}$ has to be solved
for each time slot. Since the length of adaptation window is equal
to $T$ time slots, the overall complexity of the slow adaptive OFDMA
can be much lower than that of conventional fast adaptation schemes,
especially when $T$ is large.

Before leaving this section, we emphasize that the advantage of slow
adaptive OFDMA lies not only in computational cost reduction, but
also in reducing control signaling overhead. We will investigate this
in more detail in Section \ref{simu}.

\section{Problem Size Reduction}\label{reduct}
In this section, we show that the problem size of
$\mathcal{\tilde{P}}_\text{slow}$ can be reduced from
$NK$ variables to $K$ variables under some mild assumptions. Consequently,
the computational complexity of slow adaptive OFDMA can be
markedly lower than that of fast adaptive OFDMA.

In practical multicarrier systems, the frequency intervals between
any two subcarriers are much smaller than the carrier frequency. The
reflection, refraction and diffusion of electromagnetic waves behave
the same across the subcarriers. This implies that the channel
gain $g_{k,n}^{(t)}$ is identically distributed over $n$
(subcarriers), although it is not needed in our algorithm
derivations in the previous sections.

When $g_{k,n}^{(t)}$ for different $n$ are identically distributed,
different subcarriers become indistinguishable to a user $k$. In
this case, the optimal solution, if exists, does not depend on $n$.
Replacing $x_{k,n}$ by $x_k$ in $\mathcal{\tilde{P}}_\text{slow}$,
we obtain the following formulation:
\begin{align*}
\mathcal{\tilde{P}}'_\text{slow}\!:\,\max_{x_k} &~
\sum_{k=1}^K\sum_{n=1}^N x_k\mathbb{E}\left\{r_{k,n}^{(t)}\right\}\\
\text{s.t.} &~ \inf_{\varrho>0}\!\left\{q_k\!+\!\varrho N\Lambda_k(-\varrho^{-1}x_k)\!-\!\varrho\log\epsilon_k\right\}\!\le\!0, \quad\forall k \\
&~ \sum_{k=1}^K x_k \le 1, \\
&~ x_k\ge0, \quad\forall k.
\end{align*}

Note that the problem structure of
$\mathcal{\tilde{P}}'_\text{slow}$ is exactly the same as that of
$\mathcal{\tilde{P}}_\text{slow}$, except that the problem size is
reduced from $NK$ variables to $K$ variables. Hence, the algorithm
developed in Section \ref{algo} can also be applied to solve
$\mathcal{\tilde{P}}'_\text{slow}$, with the following vector/matrix
size reductions:
$A^0=[\mathbf{e}_N,-I_K]^T\in\mathbb{R}^{(1+K)\times K}$,
$\mathbf{b}^0=[1,~0,\cdots,~0]^T\in\mathbb{R}^{1+K}$ in \eqref{fb2},
$\mathbf{u}^{i,\bar{\kappa}}=[u_1^{i,\bar{\kappa}},\cdots,u_K^{i,\bar{\kappa}}]^T\in\mathbb{R}^K$
in \eqref{fbcut}, and
$\mathbf{v}=\big[-\mathbb{E}\{r_1^{(t)}\},\cdots,-\mathbb{E}\{r_K^{(t)}\}\big]^T\in\mathbb{R}^K$
in \eqref{optcut}. Compared with $\mathcal{\tilde{P}}_\text{slow}$,
the iteration complexity of $\mathcal{\tilde{P}}'_\text{slow}$ is
now reduced to $\mathcal{O}(K\log^2(1/\delta))$. Indeed, this can
even be lower than the complexity of solving one
$\mathcal{P}_\text{fast}$
--- $\mathcal{O}(\sqrt{NK}L_0)$, since $K$ is typically much
smaller than $N$ in real systems. Thus, the overall complexity of
slow adaptive OFDMA is significantly lower than that of fast adaptation
over $T$ time slots.

Before leaving this section, we emphasize that the problem size
reduction in $\mathcal{\tilde{P}}'_\text{slow}$ does not compromise
the optimality of the solution. On the other hand,
$\mathcal{\tilde{P}}_\text{slow}$ is more general in the sense that
it can be applied to systems in which the frequency bands of parallel
subchannels are far apart, so that the channel distributions are
not identical across different subchannels.

\section{Simulation Results}\label{simu}
In this section, we demonstrate the performance of our proposed slow
adaptive OFDMA scheme through numerical simulations. We simulate an
OFDMA system with $4$ users and $64$ subcarriers. Each user $k$ has
a requirement on its short-term data rate $q_k=20\text{bps}$. The
$4$ users are assumed to be uniformly distributed in a cell of
radius $R=100\text{m}$. That is, the distance $d_k$ between user $k$
and the BS follows the distribution\footnote{The distribution of
user's distance from the BS $f(d)=\frac{2d}{R^2}$ is derived from
the uniform distribution of user's position $f(x,y)=\frac{1}{\pi
R^2}$, where $(x,y)$ is the Cartesian coordinate of the position.}
$f(d)=\frac{2d}{R^2}$. The path-loss exponent $\gamma$ is equal to
4, and the shadowing effect $s_k$ follows a log-normal distribution,
i.e., $10\log_{10}(s_k)\sim\mathcal{N}(0,8\text{dB})$. The
small-scale channel fading is assumed to be Rayleigh distributed.
Suppose that the transmission power of the BS on each subcarrier is
90dB measured at a reference point 1 meter away from the BS, which
leads to an average received power of 10dB at the boundary of the
cell\footnote{The average received power at the boundary is
calculated by
$90\text{dB}+10\log_{10}\left(\frac{100}{1}\right)^{-4}\text{dB}=10\text{dB}$
due to the path-loss effect.}. In addition, we set $W=1\text{Hz}$
and $N_0=1$, and the capacity gap is
$\Gamma=-\log(5\text{BER}) / 1.5=5.0673$, where the target BER
is set to be $10^{-4}$. Moreover, the length of one \emph{slot},
within which the channel gain remains unchanged, is
$T_0=1\text{ms}$.\footnote{The coherence time is
given by $T_0=\frac{9c}{16\pi f_c v}$, where $c$ is the speed
of light, $f_c$ is the carrier frequency, and $v$ is the velocity of
mobile user. As an example, we choose $f_c=2.5\text{GHz}$, and if the
user is moving at 45 miles per hour, the coherence time is around
1ms.} The length of the \emph{adaptation window} is
chosen to be $T=1\text{s}$, implying that each window contains
$1000$ slots. Suppose that the path loss and shadowing do not change
within a window, but varies independently from one window to
another. For each window, we solve the size-reduced problem
$\mathcal{\tilde{P}}'_\text{slow}$, and later Monte-Carlo simulation
is conducted over 61 independent windows that yield non-empty
feasible sets of $\mathcal{\tilde{P}}'_\text{slow}$ when
$\epsilon_k=0.1$.

\begin{figure}
\begin{center}
\includegraphics [height=7cm]{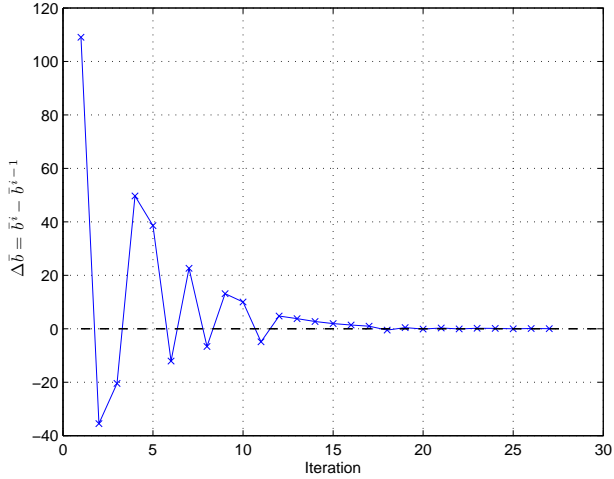}
\caption{Trace of the difference of objective value $\bar{b}^i$
between adjacent iterations ($\epsilon_k=0.2$).} \label{fg_cvgb}
\end{center}
\end{figure}

In Fig. \ref{fg_cvgb} and Fig. \ref{fg_cvgiter}, we investigate the
fast convergence of the proposed algorithm. The error tolerance
parameter is chosen as $\delta=10^{-2}$. In Fig. \ref{fg_cvgb}, we
record the trace of one adaptation window\footnote{The simulation
results show that all the feasible windows appear with similar
convergence behavior.} and plot the improvement in the objective
function value (i.e., system throughput) in each iteration, i.e.,
$\Delta\bar{b}=\bar{b}^i-\bar{b}^{i-1}$. When $\Delta\bar{b}$ is
positive, the objective value increases with each iteration. It can
be seen that $\Delta\bar{b}$ quickly converges to close to zero
within only 27 iterations. We also notice that fluctuation exists in
$\Delta\bar{b}$ within the first 11 iterations. This is mainly
because during the search for an optimal solution, it is possible
for query points to become infeasible. However, the feasibility cuts
\eqref{fbcut} then adopted will make sure that the query points in
subsequent iterations will eventually become feasible. The curve in
Fig. \ref{fg_cvgb} verifies the tendency. As
$\mathcal{\tilde{P}}_\text{slow}$ is convex, this observation
implies that the proposed algorithm can converge to an optimal
solution of $\mathcal{\tilde{P}}_\text{slow}$ within a small number
of iterations. In Fig. \ref{fg_cvgiter}, we plot the number of
iterations needed for convergence for different application windows.
The result shows that the proposed algorithm can in general converge
to an optimal solution of $\mathcal{\tilde{P}}_\text{slow}$ within
35 iterations. On average, the algorithm converges after 22
iterations, where each iteration takes 1.467 seconds.\footnote{We
conduct a simulation on Matlab 7.0.1, where the system
configurations are given as: Processor: Intel(R) Core(TM)2 CPU
P8400@2.26GHz 2.27GHz, Memory: 2.00GB, System Type: 32-bit Operating
System.}

\begin{figure}
\begin{center}
\includegraphics [height=7cm]{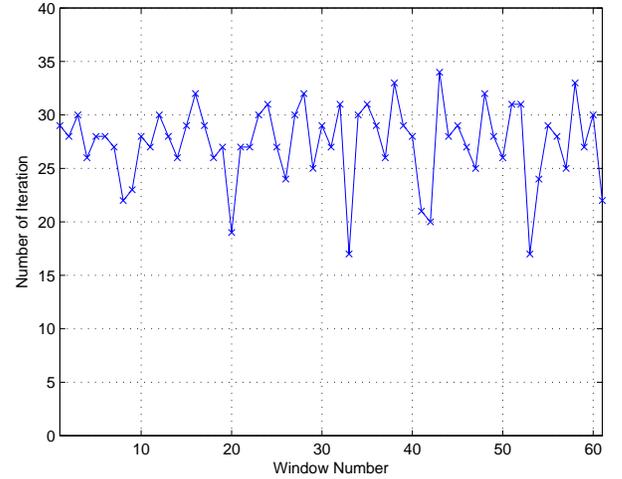}
\caption{Number of iterations for convergence of all the feasible
windows ($\epsilon_k=0.2$).} \label{fg_cvgiter}
\end{center}
\end{figure}

Moreover, we plot the number of iterations needed for checking the
feasibility of $\mathcal{\tilde{P}}_\text{slow}$. In Fig.
\ref{fg_cvgfb}, we conduct a simulation over 100 windows, which
consists of 61 feasible windows (dots with cross) and 39 infeasible
windows (dots with circle). On average, the algorithm can determine
if $\mathcal{\tilde{P}}_\text{slow}$ is feasible or not after 7
iterations. The quick feasibility check can help to deal with the
admission of mobile users in the cell. Particularly, if there is a
new user moving into the cell, the BS can adopt the feasibility
check to quickly determine if the radio resources can accommodate
the new user without sacrificing the current users' QoS
requirements.

\begin{figure}
\begin{center}
\includegraphics [height=7cm]{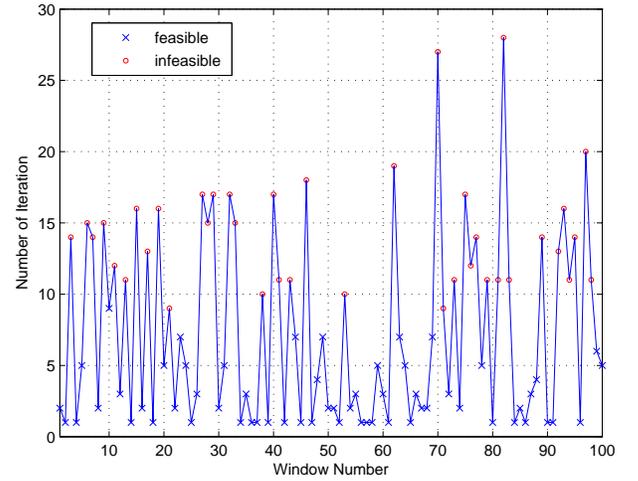}
\caption{Number of iterations for feasibility check of all the
windows ($\epsilon_k=0.2$).} \label{fg_cvgfb}
\end{center}
\end{figure}

In Fig. \ref{fg_thp.comp}, we compare the spectral efficiency of
slow adaptive OFDMA with that of fast adaptive OFDMA\footnote{For
illustrative purpose, we have only considered
$\mathcal{P}_\text{fast}$ as one of the typical formulations of fast
adaptive OFDMA in our comparisons. However, we should point out that
there are some work on fast adaptive OFDMA which impose less
restrictive constraints on user data rate requirement. For example,
in \cite{WonEva:08}, it considered average user data rate constraints
which exploits time diversity to achieve higher spectral
efficiency.}, where zero outage of short-term data rate requirement
is ensured for each user. In addition, we take into account the
control overheads for subcarrier allocation, which will considerably
affect the system throughput as well. Here, we assume that the
control signaling overhead consumes a bandwidth equivalent to $10\%$
of a slot length $T_0$ every time SCA is updated \cite{GroGeeKarWol:06}.
Note that within each window that contains $1000$ slots, the control
signaling has to be transmitted $1000$ times in the fast adaptation
scheme, but once in the slow adaptation scheme. In Fig.
\ref{fg_thp.comp}, the line with circles represents the performance
of the fast adaptive OFDMA scheme, while that with dots corresponds
to the slow adaptive OFDMA. The figure shows that although slow
adaptive OFDMA updates subcarrier allocation 1000 times less
frequently than fast adaptive OFDMA, it can achieve on average
71.88\% of the spectral efficiency. Considering the substantially
lower computational complexity and signaling overhead, slow adaptive
OFDMA holds significant promise for deployment in real-world
systems.

\begin{figure}
\begin{center}
\includegraphics [height=7cm]{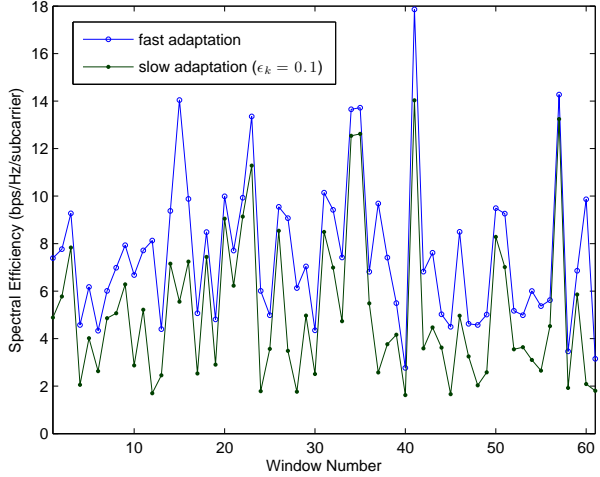}
\caption{Comparison of system spectral efficiency between fast
adaptive OFDMA and slow adaptive OFDMA.} \label{fg_thp.comp}
\end{center}
\end{figure}

As mentioned earlier, $\mathcal{\tilde{P}}_\text{slow}$ is more
conservative than the original problem $\mathcal{P}_\text{slow}$,
implying that the outage probability is guaranteed to be satisfied
if subcarriers are allocated according to the optimal solution of
$\mathcal{\tilde{P}}_\text{slow}$. This is illustrated in Fig.
\ref{fg_outage_eps}, which shows that the outage probability is
always lower than the desired threshold $\epsilon_k=0.1$.

Fig. \ref{fg_outage_eps} shows that the subcarrier allocation via $\mathcal{\tilde{P}}_\text{slow}$ could still be quite conservative, as the actual outage probability is much lower than
$\epsilon_k$. One way to tackle the problem is to set $\epsilon_k$ to
be larger than the actual desired value. For example, we could tune
$\epsilon_k$ from 0.1 to 0.3. By doing so, one can potentially
increase the system spectral efficiency, as the feasible set of
$\mathcal{\tilde{P}}_\text{slow}$ is enlarged. A question that
immediately arises is how to choose the right $\epsilon_k$, so that
the actual outage probability stays right below the desired value.
Towards that end, we can perform a binary search on $\epsilon_k$ to
find the best parameter that satisfies the requirement. Such a
search, however, inevitably involves high computational costs. On
the other hand, Fig. \ref{fg_thp_eps} shows that the gain in
spectral efficiency by increasing $\epsilon_k$ is marginal. The gain
is as little as 0.5 bps/Hz/subcarrier when $\epsilon_k$ is increased
drastically from 0.05 to 0.7. Hence, in practice, we can simply set
$\epsilon_k$ to the desired outage probability value to guarantee
the QoS requirement of users.

\begin{figure}
\begin{center}
\includegraphics [height=7cm]{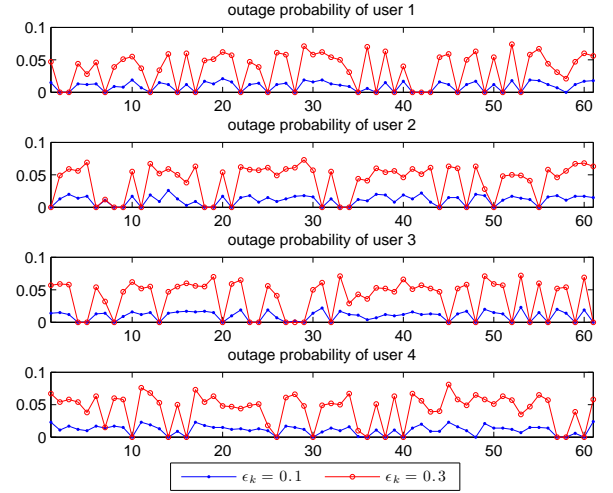}
\caption{Outage probability of the 4 users over 61 independent
feasible windows.} \label{fg_outage_eps}
\end{center}
\end{figure}

\begin{figure}
\begin{center}
\includegraphics [height=7cm]{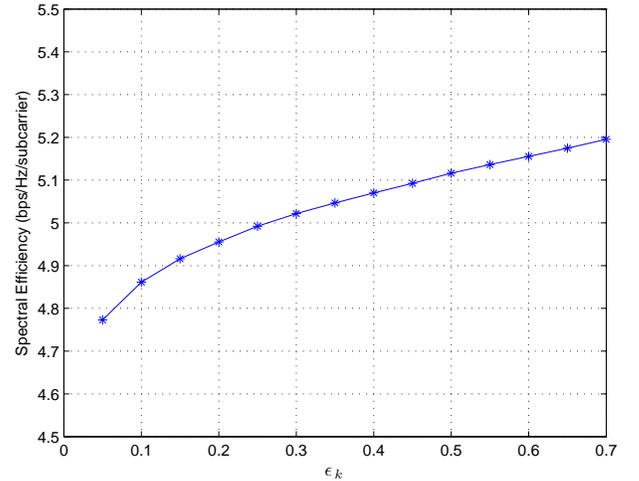}
\caption{Spectral efficiency versus tolerance parameter
$\epsilon_k$. Calculated from the average overall system throughput
on one window, where the long-term average channel gain $\sigma_k$
of the 4 users are $-65.11$dB, $-56.28$dB, $-68.14$dB and
$-81.96$dB, respectively.} \label{fg_thp_eps}
\end{center}
\end{figure}

In the development of the STC \eqref{chance.approx}, we considered
that the channel gain $g_{k,n}$ are independent for different $n$'s
and $k$'s. While it is true that channel fading is independent
across different users, it is typically correlated in the frequency
domain. We investigate the effect of channel correlation in
frequency domain through simulations. A wireless channel with an
exponential decaying power profile is adopted, where the
root-mean-square delay is equal to 37.79ns. For comparison, the
curves of outage probability with and without frequency correlation
are both plotted in Fig. \ref{fg_freq.corr}. We choose the tolerance
parameter to be $\epsilon_k=0.3$. The figure shows that with
frequency-domain correlation, the outage probability requirement of 0.3 is
violated occasionally. Intuitively, such a problem becomes
negligible when the channel is highly frequency selective, and is
more severe when the channel is more frequency flat. To address the
problem, we can set $\epsilon_k$ to be lower than the desired outage
probability value\footnote{Alternatively, we can divide $N$
subcarriers into $\frac{N}{N_c}$ subchannels (each subchannel
consists $N_c$ subcarriers), and represent each subchannel via an
average gain. By doing so, we can treat the subchannel gains as being
independent of each other.}.  For example, when we choose
$\epsilon_k=0.1$ in Fig. \ref{fg_freq.corr}, the outage
probabilities all decreased to lower than the desired value 0.3, and
hence the QoS requirement is satisfied (see the line with dots).

\begin{figure}
\begin{center}
\includegraphics [height=9.5cm]{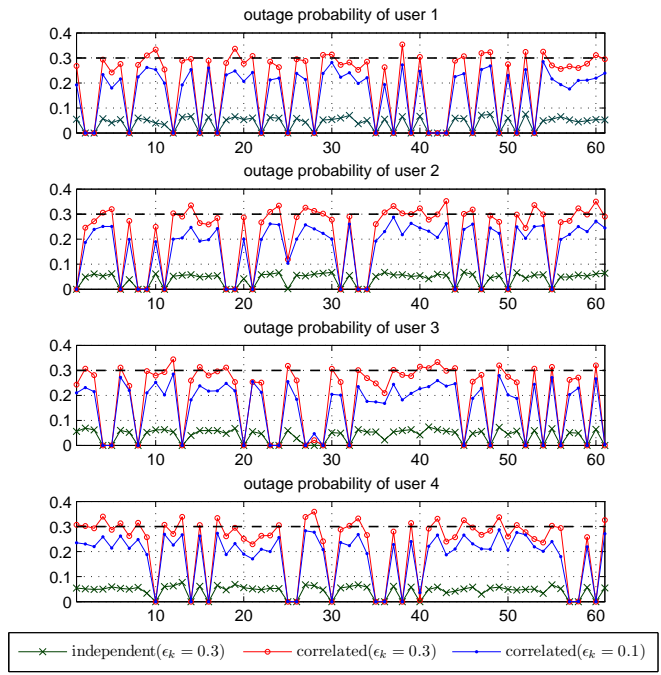}
\caption{Comparison of outage probability of $4$ users with and
without frequency correlations in channel model.}
\label{fg_freq.corr}
\end{center}
\end{figure}

\section{Conclusions}\label{conclude}
This paper proposed a slow adaptive OFDMA scheme that can achieve a
throughput close to that of fast adaptive OFDMA schemes, while
significantly reducing the computational complexity and control
signaling overhead. Our scheme can satisfy user data rate
requirement with high probability. This is achieved by formulating
our problem as a stochastic optimization problem. Based on this
formulation, we design a polynomial-time algorithm for subcarrier
allocation in slow adaptive OFDMA. Our simulation results showed
that the proposed algorithm converges within 22 iterations on
average.

In the future, it would be interesting to investigate the chance
constrained subcarrier allocation problem when frequency correlation exists, or when the channel distribution information is not perfectly known at the BS. Moreover, it is worthy to study the tightness of the Bernstein approximation. Another interesting
direction is to consider discrete data rate and exclusive subcarrier
allocation. In fact, the proposed algorithm based on cutting plane
methods can be extended to incorporate integer constraints on the
variables (see e.g., \cite{Mit:03}).

Finally, our work is an initial attempt to apply the chance
constrained programming methodology to wireless system designs. As
probabilistic constraints arise quite naturally in many wireless
communication systems due to the randomness in channel conditions,
user locations, etc., we expect that chance constrained programming
will find further applications in the design of high performance
wireless systems.

\appendices
\section{Bernstein Approximation Theorem}\label{apdx_bern}
\begin{theorem}\label{thm_bern}
Suppose that
$F(\mathbf{x},\mathbf{r}):\mathbb{R}^n\times\mathbb{R}^{n_r}\rightarrow\mathbb{R}$
is a function of $\mathbf{x}\in\mathbb{R}^n$ and
$\mathbf{r}\in\mathbb{R}^{n_r}$, and $\mathbf{r}$ is a random vector
whose components are nonnegative. For every $\epsilon>0$, if there
exists an $\mathbf{x}\in\mathbb{R}^n$ such that
\begin{equation}\label{apx1}
\inf_{\varrho>0}\left\{\Psi(\mathbf{x},\varrho)-\varrho\epsilon\right\}
\le 0,
\end{equation}
where 
$$
\Psi(\mathbf{x},\varrho)\triangleq
\varrho
\mathbb{E}\left\{\exp(\varrho^{-1}F(\mathbf{x},\mathbf{r}))\right\},
$$
then $\Pr\left\{F(\mathbf{x},\mathbf{r})>0\right\}\le \epsilon$.
\end{theorem}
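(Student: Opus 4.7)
The statement is a clean instance of the Chernoff/exponential-Markov bound combined with an algebraic rewriting of the hypothesis \eqref{apx1}. The plan is to carry this out in two short steps.

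\textbf{Step 1: exponential-Markov bound.} I would first observe that, for any fixed $\varrho>0$, strict monotonicity of $\exp(\cdot)$ gives
\begin{equation*}
\{F(\mathbf{x},\mathbf{r})>0\}=\{\exp(\varrho^{-1}F(\mathbf{x},\mathbf{r}))>1\}.
\end{equation*}
Markov's inequality applied to the non-negative random variable $\exp(\varrho^{-1}F(\mathbf{x},\mathbf{r}))$ then yields
\begin{equation*}
\Pr\{F(\mathbf{x},\mathbf{r})>0\}\le \mathbb{E}\bigl[\exp(\varrho^{-1}F(\mathbf{x},\mathbf{r}))\bigr]=\frac{\Psi(\mathbf{x},\varrho)}{\varrho}.
\end{equation*}
Since this bound holds for every $\varrho>0$, I can take the infimum on the right:
\begin{equation*}
\Pr\{F(\mathbf{x},\mathbf{r})>0\}\le \inf_{\varrho>0}\frac{\Psi(\mathbf{x},\varrho)}{\varrho}.
\end{equation*}

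\textbf{Step 2: algebraic reformulation of the hypothesis.} For a fixed $\varrho>0$, dividing through by the positive quantity $\varrho$ shows that the inequality $\Psi(\mathbf{x},\varrho)-\varrho\epsilon\le 0$ is equivalent to $\Psi(\mathbf{x},\varrho)/\varrho\le \epsilon$. Consequently,
\begin{equation*}
\inf_{\varrho>0}\bigl\{\Psi(\mathbf{x},\varrho)-\varrho\epsilon\bigr\}\le 0 \;\Longrightarrow\; \inf_{\varrho>0}\frac{\Psi(\mathbf{x},\varrho)}{\varrho}\le \epsilon.
\end{equation*}
Chaining this with Step 1 produces the desired conclusion $\Pr\{F(\mathbf{x},\mathbf{r})>0\}\le \epsilon$.

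\textbf{Main obstacle.} The only subtlety is that the infimum in \eqref{apx1} might not be attained. I would handle this by selecting, for each $\eta>0$, an $\eta$-optimal $\varrho_\eta>0$ with $\Psi(\mathbf{x},\varrho_\eta)-\varrho_\eta\epsilon<\eta$, combining with Step 1 to get $\Pr\{F(\mathbf{x},\mathbf{r})>0\}\le \epsilon+\eta/\varrho_\eta$, and then letting $\eta\to 0$. This requires a brief check that $\varrho_\eta$ does not drift to $0$; this is automatic, since either $\Pr\{F(\mathbf{x},\mathbf{r})>0\}=0$ (in which case the claim is trivial) or, when $\Pr\{F(\mathbf{x},\mathbf{r})>0\}>0$, the moment generating function $\mathbb{E}[\exp(\varrho^{-1}F)]$ diverges fast enough as $\varrho\to 0^+$ to force $\Psi(\mathbf{x},\varrho)-\varrho\epsilon\to +\infty$. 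I therefore do not expect genuine difficulty beyond this minor bookkeeping --- the essential content is just the exponential-Markov step.
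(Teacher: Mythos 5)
Your proof is correct and follows essentially the same route as the paper's sketch: the exponential-Markov (Chernoff) bound $\Pr\{F(\mathbf{x},\mathbf{r})>0\}\le\mathbb{E}\left\{\exp(\varrho^{-1}F(\mathbf{x},\mathbf{r}))\right\}$ (the paper phrases it via a generating function $\psi$ and then specializes to $\psi(z)=\exp(z)$), followed by multiplying through by $\varrho$. If anything, your handling of the possibly unattained infimum via $\eta$-optimal $\varrho_\eta$ and the divergence of $\Psi(\mathbf{x},\varrho)-\varrho\epsilon$ as $\varrho\to 0^+$ is more careful than the paper's sketch, which simply asserts that the pointwise condition \eqref{apx11} is ``equivalent to'' the infimum condition \eqref{apx1}.
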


\begin{proof}
\emph{(Sketch)} The proof of the above theorem is given in \cite{NemSha:06} in details. To help the readers to better understand the idea, we give an
overview of the proof here.

It is shown in \cite{NemSha:06} (see section 2.2 therein) that the
probability $\text{Pr}\{F(\mathbf{x},\mathbf{r})\ge0\}$ can be
bounded as follows:
$$
\text{Pr}\{F(\mathbf{x},\mathbf{r})>0\}
\le\mathbb{E}\left\{\psi(\varrho^{-1}F(\mathbf{x},\mathbf{r}))\right\}.
$$
Here, $\varrho>0$ is arbitrary, and
$\psi(\cdot):\mathbb{R}\rightarrow\mathbb{R}$ is a nonnegative,
nondecreasing, convex function satisfying $\psi(0)=1$ and
$\psi(z)>\psi(0)$ for any $z>0$. One such $\psi$ is the exponential
function $\psi(z)=\exp(z)$. If there exists a $\hat{\varrho}>0$ such
that
$$
\mathbb{E}\left\{\exp(\hat{\varrho}^{-1}F(\mathbf{x},\mathbf{r}))\right\}\le\epsilon,
$$
then $\text{Pr}\{F(\mathbf{x},\mathbf{r})>0\}\le\epsilon$. By
multiplying by $\hat{\varrho}>0$ on both sides, we obtain the
following sufficient condition for the chance constraint
$\Pr\left\{F(\mathbf{x},\mathbf{r})>0\right\}\le \epsilon$ to hold:
\begin{equation}\label{apx11}
\Psi(\mathbf{x},\hat{\varrho})-\hat{\varrho}\epsilon\le0.
\end{equation}
In fact, condition \eqref{apx11} is equivalent to \eqref{apx1}.
Thus, the latter provides a conservative approximation of the chance
constraint.
\end{proof}

\bibliographystyle{IEEEtran}
\bibliography
{IEEEabrv,StringDefinitions,CVStringDefinitions,BiblioCV,WGroup,reference}

\begin{biography}
	[{\includegraphics[width=1in,height=1.25in,clip,keepaspectratio]{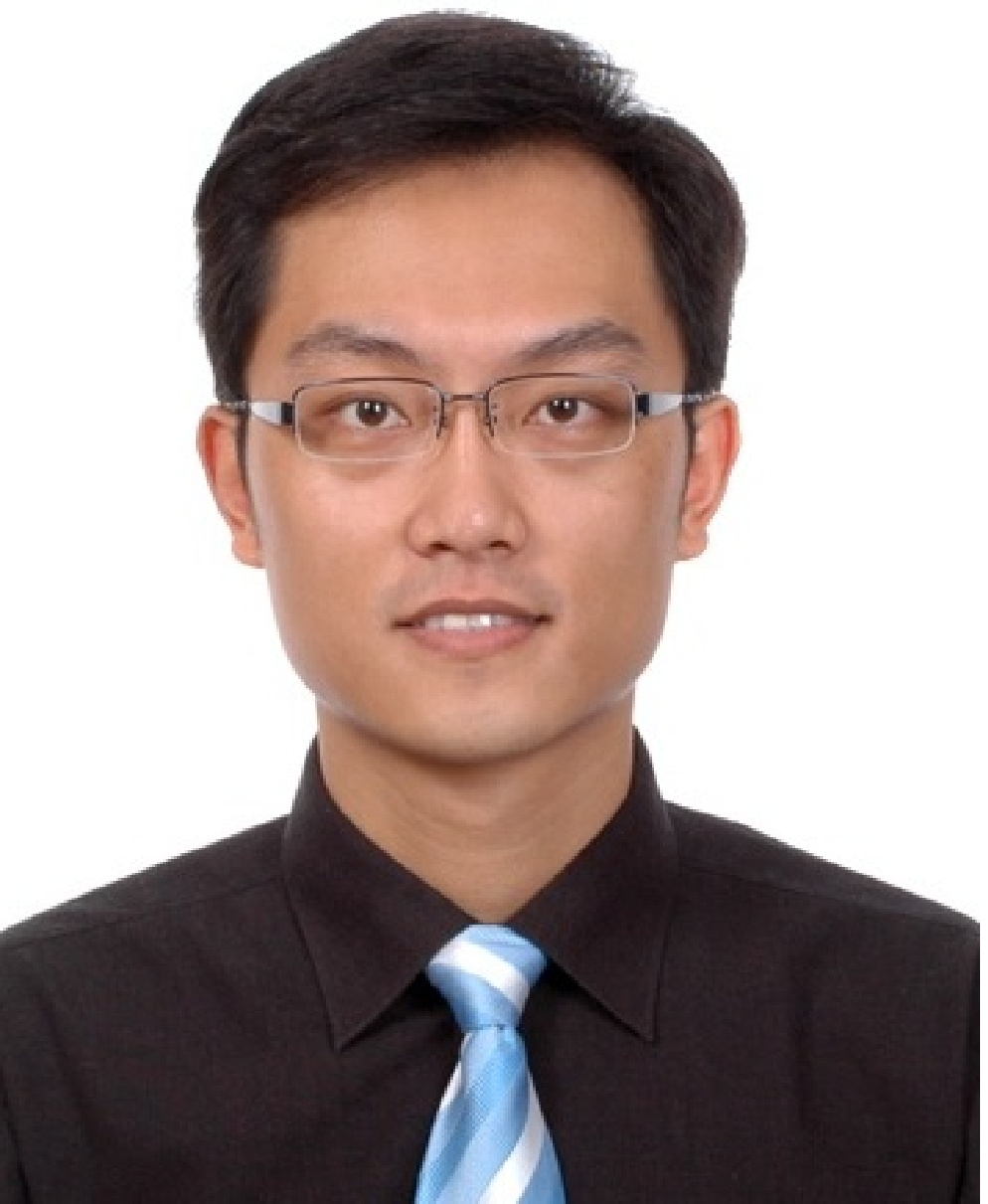}}]
	{\bf William Wei-Liang Li} (S'09) received the B.S. degree (with highest honor) in Automatic Control Engineering from Shanghai Jiao Tong University (SJTU), China in 2006. Since Aug. 2007, he has been with the Department of Information Engineering, the Chinese University of Hong Kong (CUHK), where he is now a Ph.D. candidate.
	
	From 2006 to 2007, he was with the Circuit and System Laboratory, Peking University (PKU), China, where he worked on signal processing and embedded system design. Currently, he is a visiting graduate student at the Laboratory for Information and Decision Systems (LIDS), Massachusetts Institute of Technology (MIT). His main research interests are in the wireless communications and networking, specifically broadband OFDM and multi-antenna techniques, pragmatic resource allocation algorithms and stochastic optimization in wireless systems.

	He is currently a reviewer of {\scshape IEEE Transactions on Wireless Communications}, IEEE International Conference on Communications (ICC), IEEE Consumer Communications and Networking Conference (CCNC), European Wireless and Journal of Computers and Electrical Engineering.

	During the four years of undergraduate study, he was consistently awarded the first-class scholarship, and graduated with highest honors from SJTU. He received the First Prize Award of the National Electrical and Mathematical Modelling Contest in 2005, the Award of CUHK Postgraduate Student Grants for Overseas Academic Activities and the Global Scholarship for Research Excellence from CUHK in 2009.
\end{biography}
 \vspace{-0.4cm}
\begin{biography}
	[{\includegraphics[width=1in,height=1.25in,clip,keepaspectratio]{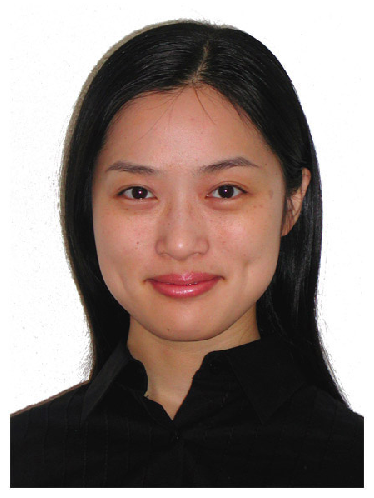}}]
	{\bf Ying Jun (Angela) Zhang} (S'00-M'05) received her Ph.D. degree in Electrical and Electronic Engineering from the Hong Kong University of Science and Technology, Hong Kong in 2004.
	
	Since Jan. 2005, she has been with the Department of Information Engineering in The Chinese University of Hong Kong, where she is currently an Assistant Professor. Her research interests include wireless communications and mobile networks, adaptive resource allocation, optimization in wireless networks, wireless LAN/MAN, broadband OFDM and multicarrier techniques, MIMO signal processing.

	Dr. Zhang is on the Editorial Boards of {\scshape IEEE Transactions on Wireless Communications} and Wiley Security and Communications Networks Journal. She has served as a TPC Co-Chair of Communication Theory Symposium of IEEE ICC 2009, Track Chair of ICCCN 2007, and Publicity Chair of IEEE MASS 2007. She has been serving as a Technical Program Committee Member for leading conferences including IEEE ICC, IEEE Globecom, IEEE WCNC, IEEE ICCCAS, IWCMC, IEEE CCNC, IEEE ITW, IEEE MASS, MSN, ChinaCom, etc. Dr. Zhang is an IEEE Technical Activity Board GOLD Representative, 2008 IEEE GOLD Technical Conference Program Leader, IEEE Communication Society GOLD Coordinator, and a Member of IEEE Communication Society Member Relations Council (MRC).
	
	As the only winner from Engineering Science, Dr. Zhang has won the Hong Kong Young Scientist Award 2006, conferred by the Hong Kong Institution of Science.
\end{biography}
 \vspace{-0.4cm}
\begin{biography}
	[{\includegraphics[width=1in,height=1.25in,clip,keepaspectratio]{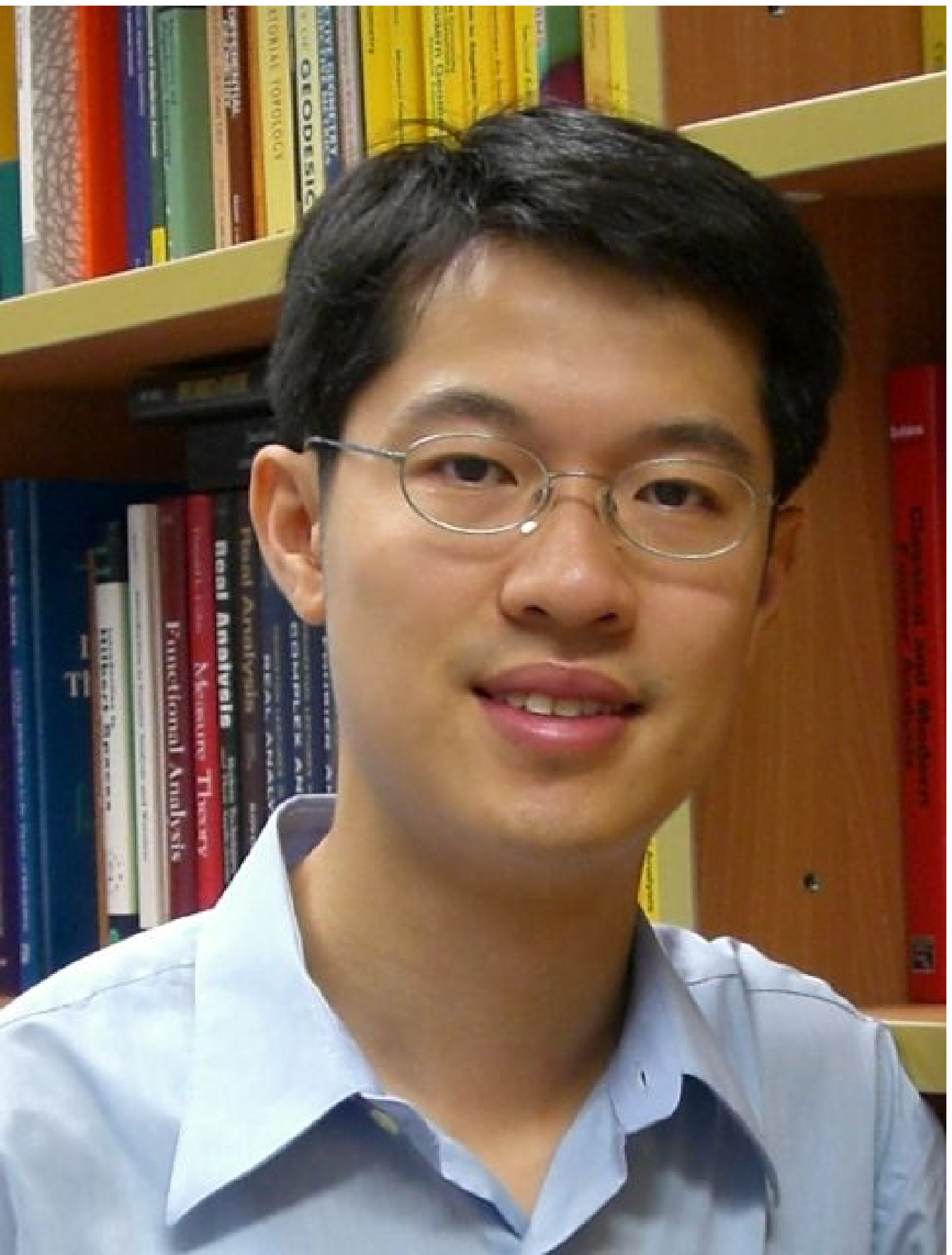}}]
	{\bf Anthony Man-Cho So} received his BSE degree in Computer Science from Princeton University in 2000 with minors in Applied and Computational Mathematics, Engineering and Management Systems, and German Language and Culture. He then received his MSc degree in Computer Science in 2002, and his Ph.D. degree in Computer Science with a Ph.D. minor in Mathematics in 2007, all from Stanford University.  
	
	Dr. So joined the Department of Systems Engineering and Engineering Management at the Chinese University of Hong Kong in 2007.  His current research focuses on the interplay between optimization theory and various areas of algorithm design, with applications in portfolio optimization, stochastic optimization, combinatorial optimization, algorithmic game theory, signal processing, and computational geometry.  
	
	Dr. So is a recipient of the 2008 Exemplary Teaching Award given by the Faculty of Engineering at the Chinese University of Hong Kong.
\end{biography}

\begin{biography}
	[{\includegraphics[width=1in,height=1.25in,clip,keepaspectratio]{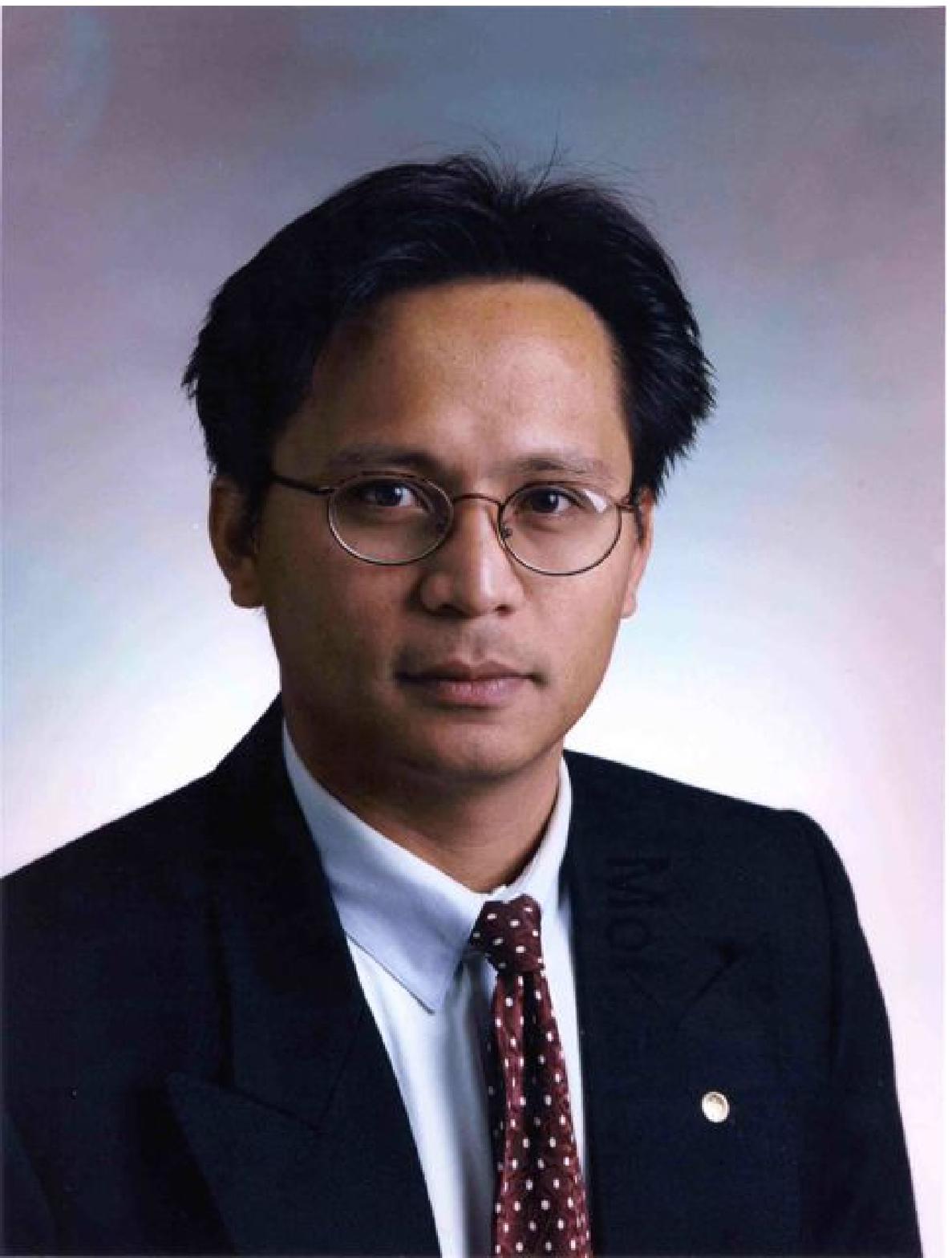}}]
	{\bf Moe Z. Win} (S'85-M'87-SM'97-F'04)
	received both the Ph.D.\ in Electrical Engineering and M.S.\ in Applied Mathematics
	as a Presidential Fellow at the University of Southern California (USC) in 1998.
	He received an M.S.\ in Electrical Engineering from USC in 1989, and a B.S.\ ({\em magna cum laude})
	in Electrical Engineering from Texas A\&M University in 1987.

	Dr.\ Win is an Associate Professor at the Massachusetts Institute of Technology (MIT).
	Prior to joining MIT, he was at AT\&T Research
	Laboratories for five years and at the Jet Propulsion Laboratory for seven years.
	His research encompasses developing fundamental theories, designing algorithms, and
	conducting experimentation for a broad range of real-world problems.
	His current research topics include location-aware networks,
	time-varying channels, multiple antenna systems, ultra-wide bandwidth
	systems, optical transmission systems, and space communications systems.

	Professor Win is an IEEE Distinguished Lecturer and
	        elected Fellow of the IEEE, cited for ``contributions to wideband wireless transmission.''
	He was honored with
	        the IEEE Eric E. Sumner Award (2006), an IEEE Technical Field Award for
	        ``pioneering contributions to ultra-wide band communications science and technology.''
	Together with students and colleagues, his papers have received several awards including
	        the IEEE Communications Society's Guglielmo Marconi Best Paper Award (2008)
	    and the IEEE Antennas and Propagation Society's Sergei A. Schelkunoff Transactions Prize Paper Award (2003).
	His other recognitions include
	        the Laurea Honoris Causa from the University of Ferrara, Italy (2008),
	        the Technical Recognition Award of the IEEE ComSoc Radio Communications Committee (2008),
	        Wireless Educator of the Year Award (2007),
	        the Fulbright Foundation Senior Scholar Lecturing and Research Fellowship (2004),
	        the U.S. Presidential Early Career Award for Scientists and Engineers (2004),
	        the AIAA Young Aerospace Engineer of the Year (2004),
	    and the Office of Naval Research Young Investigator Award (2003).

	Professor Win has been actively involved in organizing and chairing
	a number of international conferences. He served as
	    the Technical Program Chair for
	        the IEEE Wireless Communications and Networking Conference in 2009,
	        the IEEE Conference on Ultra Wideband in 2006,
	        the IEEE Communication Theory Symposia of ICC-2004 and Globecom-2000,
	        and
	        the IEEE Conference on Ultra Wideband Systems and Technologies in 2002;
	    Technical Program Vice-Chair for
	        the IEEE International Conference on Communications in 2002; and
	    the Tutorial Chair for
	        ICC-2009 and
	        the IEEE Semiannual International Vehicular Technology Conference in Fall 2001.
	He was
	    the chair (2004-2006) and secretary (2002-2004) for
	        the Radio Communications Committee of the IEEE Communications Society.
	Dr.\ Win is currently
	    an Editor for {\scshape IEEE Transactions on Wireless Communications.}
	He served as
	    Area Editor for {\em Modulation and Signal Design} (2003-2006),
	    Editor for {\em Wideband Wireless and Diversity} (2003-2006), and
	    Editor for {\em Equalization and Diversity} (1998-2003),
	        all for the {\scshape IEEE Transactions on Communications}.
	He was Guest-Editor
	        for the
	        {\scshape Proceedings of the IEEE}
	        (Special Issue on UWB Technology \& Emerging Applications) in 2009 and
	        {\scshape IEEE Journal on Selected Areas in Communications}
	        (Special Issue on Ultra\thinspace-Wideband Radio in Multiaccess
	        Wireless Communications) in 2002.
\end{biography}

\end{document}